\newcounter{theoremc}
\newcounter{tmp}
\newcounter{mytheorem}
\newcounter{mytheorem2}
\newcounter{lemmalarge}
\newcounter{lemmasmall}
\newcounter{lemmaroots}
\theoremstyle{plain}
\newtheorem{theorem}[theoremc]{Theorem}
\newtheorem{proposition}[theoremc]{Proposition}
\newtheorem{lemma}[theoremc]{Lemma}
\newtheorem{corollary}[theoremc]{Corollary}
\newtheorem*{statement}{Statement}
\theoremstyle{definition}
\newtheorem{definition}[theoremc]{Definition}
\theoremstyle{remark}
\newtheorem{remark}[theoremc]{Remark}
\newcommand{\N}{\mathbb{N}}
\newcommand{\Z}{\mathbb{Z}}
\newcommand{\1}{\mathbb{1}}
\newcommand{\0}{\mathbb{0}}
\renewcommand{\O}{\mathcal{O}}
\title{Kadanoff Sand Piles, following the snowball.\thanks{Partially supported by  IXXI (Complex System Institute, Lyon) and ANR projects Subtile and MODMAD. }}
\author[$\triangleright\triangleleft$]{K\'evin Perrot}
\author[$\diamond$]{\'Eric R\'emila}
\affil[$\triangleright$]{Universit\'e de  Lyon - LIP - (umr 5668 - CNRS - ENS de Lyon - Universit\'e Lyon 1) - 46 all\'ee d'Italie 69364 Lyon Cedex 7, France}
\affil[$\triangleleft$]{Universit\'e Nice–Sophia Antipolis - Laboratoire I3S - UMR 6070 CNRS - 2000 route des Lucioles, BP 121, F-06903 Sophia Antipolis Cedex, France}
\affil[$\diamond$]{Universit\'e de  Lyon - Groupe d'Analyse de la Th\'eorie Economique  Lyon Saint-Etienne - (umr  5824 CNRS - Universit\'e Lyon 2) - Site st\'ephanois, 6 rue Basse des Rives, 42 023 Saint-Etienne Cedex 2, France}
\affil[$~$]{\texttt{kevin.perrot@ens-lyon.fr}\hspace{1cm}\texttt{eric.remila@univ-st-etienne.fr}}
\date{}
\begin{document}

\maketitle

\begin{abstract}
This paper is about cubic sand grains moving around on nicely packed columns in one dimension (the physical sand pile is two dimensional, but the support of sand columns is one dimensional). The Kadanoff Sand Pile Model is a discrete dynamical system describing the evolution of a finite number of stacked grains ---as they would fall from an hourglass--- to a stable configuration. Grains move according to the repeated application of a simple local rule until reaching a stable configuration from which no rule can be applied, namely a fixed point.

The main interest of the model relies in the difficulty of understanding its behavior, despite the simplicity of the rule. We are interested in describing the shape of fixed point configurations according to the number of initially stacked sand grains. In this paper, we prove the emergence of a wavy shape on fixed points, {\em i.e.}, a regular pattern is (nearly) periodically repeated on fixed points. Interestingly, the regular pattern does not cover the entire fixed point, but eventually emerges from a seemingly highly disordered segment. Fortunately, the relative size of the part of fixed points non-covered by the pattern repetition is asymptotically null.\\

\begin{center}\textbf{Keywords}\end{center}\ \indent Sand pile model, discrete dynamical system, self-organized criticality, fixed point.
\end{abstract}


\section{Introduction}\label{s:introduction}

\subsection{Framework}

Understanding and proving regularity properties on discrete dynamical systems (DDS) is readily challenging, and inferring the global behavior of a DDS defined with local rules is at the heart of our comprehension of natural phenomena \cite{weaver,grauwin}. There exists a lot of simply stated conjectures, often issued from simulations, which remain open (for example the famous Langton's Ant \cite{propp,gajardo}). Sand pile models are a class of DDS defined by local rules describing how grains move in discrete space and time. We start from a finite number of stacked grains ---in analogy with an hourglass---, and try to predict the asymptotic shape of stable configurations. 

Bak, Tang and Wiesenfeld introduced sand pile models as systems presenting self-organized criticality (SOC), a property of dynamical systems having critical points as attractors \cite{bak88}. Informally, they considered the repeated addition of sand grains on a discretized flat surface. Each addition possibly triggers an avalanche, consisting in grains falling from column to column according to simple local rules, and after a while a heap of sand has formed. SOC is related to the fact that a single grain addition on a stabilized sand pile has a hardly predictable consequence on the system, on which fractal structures may emerge \cite{creutz96}. This model can be naturally extended to any dimension.

A one-dimensional sand pile configuration can be represented as a sequence $(h_i)_{i \in \N}$ of non-negative integers, $h_i$ being the number of sand grains stacked on column $i$. The evolution starts from the initial configuration $h$ where $h_0=N$ and $h_i=0$ for $i > 0$, and in the classical sand pile model a grain falls from column $i$ to column $i+1$ if and only if the height difference $h_i - h_{i+1} > 1$. One-dimensional sand pile models were well studied in recent years \cite{goles93,durandlose98,phan04,formenti07,phan08,PSSPM,formenti11}.

In this paper, we study a generalization of classical sand pile models, the Kadanoff Sand Pile Model. A fixed parameter, denoted $p$, is defined as the number of grains falling at each rule application. The results we expound in this paper present an interesting feature: we asymptotically completely describe the form of stable configurations, though there is a part of asymptotically null relative size which remains mysterious. Furthermore, proven regularities are directly issued from this messy part. This point and its link to the half-discrete half-continuous nature of sandpile models will be discussed in the conclusion.

We formally define the model in subsection \ref{ss:definition} and present a broader class of DDS in which it belongs in subsection \ref{ss:cfg}. Subsection \ref{ss:hourglass} exposes an interesting and useful way of computing fixed points, and subsection \ref{ss:objective} introduces the results we expound in section \ref{s:snowball}. At the light of those developments, section \ref{s:conclusion} discusses that sand pile models exhibit a behavior at the edge between discrete and continuous phenomena.

In some previous works (\cite{LATA}\cite{MFCS}), we obtained a similar result for the smallest parameter $p= 2$ (the case $p=1$ is the well known Sand Pile Model), but, for the general case, we have to introduce a completely different approach. The mains ideas are the following: we first relate different representations of a sand pile configuration (subsection \ref{ss:dds}), which leads to the construction of a DDS on $Z^{p+1}$ such that the orbit of a well chosen point (according to the number of grains $N$) describes the fixed point configuration we want to characterize. This system is quasi-linear in the sense that the image of a point is obtained by a linear contracting transformation followed by a rounding (in order to remain in $\Z^{p+1}$). We want to prove that this system converges rapidly, but the rounding makes the analysis  of the system very difficult (except for $p = 2$). The key-point (subsection \ref{ss:version}) is the reduction  of this system to another quasi-linear system on $\Z^{p}$, for which we have a clear intuition (subsection \ref{ss:study}), and which allows to conclude (subsection \ref{ss:wave}).


\subsection{Definition of the Kadanoff Sand Pile Model}\label{ss:definition}

Kadanoff {\em et al} proposed a generalization of classical models in which a  fixed parameter $p$ denotes the number of grains falling at each step \cite{kadanoff89}. Starting from the initial configuration composed of $N$ stacked grains on column 0, we iterate the following rule: if the difference of height between column $i$ and $i+1$ is greater than $p$, then $p$ grains can fall from column $i$, and one grain reaches each of the $p$ columns $i+1,i+2,\dots,i+p$. The rule is applied once during each time step. In the following, {\em column} and {\em index} are synonyms; and for the sake of imagery, we always consider indices (column numbers) to be increasing on the right (see figure \ref{fig:rule}).


\begin{figure}
  \centering \includegraphics{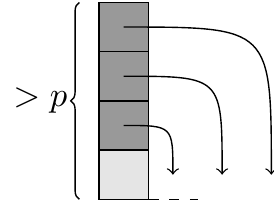}
  \caption{KSPM($p$) rule. When $p$ grains leave column $i$, the height difference $b_{i-1}$ is increased by $p$, $b_i$ is decreased by $p+1$ and $b_{i+p}$ is increased by 1. Other columns' height differences are not affected.}
  \label{fig:rule}
\end{figure}

Formally, it is defined on the space of ultimately null decreasing integer sequences where each integer represents a column of stacked sand grains. Let $h=(h_i)_{i \in \mathbb N}$ denote a {\em configuration} of the model, $h_i$ is the number of grains on column $i$.

In order to consider only the relative height between columns, we represent configurations as {\em sequences of height differences} $b=(b_i)_{i \in \N}$, where for all $i \geq 0,~ b_i=h_i-h_{i+1}$. This latter is the main representation we are using, within the space of ultimately null non-negative integer sequences.

We give two definitions of the model, obviously isomorphic. Definition \ref{definition1} is more natural, but Definition \ref{definition2} is more convenient and is the one we will use throughout the paper.

\newpage
\begin{multicols}{2}

\begin{definition}\label{definition1}
  The Kadanoff sand pile model with parameter $p>0$, KSPM($p$), is defined by two sets:
  \begin{itemize}
    \item \emph{configurations}. Ultimately null non-negative and decreasing integer sequences.
    \item \emph{transition rules}. We have a transition from a configuration $h$ to a configuration $h'$ on column $i$, and we note $h \overset{i}{\rightarrow} h'$ when
    \begin{itemize}
      \item $h'_i = h_i - p$
      \item $h'_{i+k} = h_{i+k} + 1$ for $0 < k \leq p$
      \item $h'_j = h_j$ for $j \not\in  \{i, i+1, \dots, i+p \}$. 
    \end{itemize}
  \end{itemize}
\end{definition}

We also say that $i$ is {\em fired}. Remark that according to the definition of the transition rules, $i$ may be fired if and only if $h_i-h_{i+1} > p$, otherwise $h'_i$ is negative or the sequence $h'$ is not decreasing.

\begin{definition}\label{definition2}
  The Kadanoff sand pile model with parameter $p>0$, KSPM($p$), is defined by two sets:
  \begin{itemize}
    \item \emph{configurations}. Ultimately null non-negative integer sequences.
    \item \emph{transition rules}. We have a transition from a configuration $b$ to a configuration $b'$ on column $i$, and we note   $b \overset{i}{\rightarrow} b'$ when
    \begin{itemize}
      \item $b'_{i-1}=b_{i-1} + p$ (for $i \neq 0$)
      \item $b'_i = b_i - (p-1)$
      \item $b'_{i+p} = b_{i+p} + 1$
      \item $b'_j = b_j$ for $j \not\in  \{i-1, i, i+p \}$. 
    \end{itemize}
  \end{itemize}
\end{definition}

Again, remark that according to the definition of the transition rules, $i$ may be fired if and only if $b_i > p$, otherwise $b'_i$ is negative.

\vspace{10pt}\ 

\end{multicols}

We  note $b \rightarrow b'$ when there exists an integer $i$ such that $b \overset{i}{\rightarrow} b'$. The transitive closure of $\rightarrow$ is denoted by  $\overset{*}{\rightarrow}$, and we say that $b'$ is {\em reachable} from $b$ when $b \overset{*}{\to} b'$. A basic property of the KSPM model is the \emph{diamond property} : if there exists $i$ and $j$ such that $b \overset{i}{\rightarrow} b'$ and $b \overset{j}{\rightarrow} b''$, then there exists a configuration $b'''$ such that $b' \overset{j}{\rightarrow} b'''$  and $b'' \overset{i}{\rightarrow} b'''$. 

We say that a configuration $b$ is \emph{stable}, or a \emph{fixed point}, if no transition is possible from $b$. As a  consequence of the diamond property, one can easily check that, for each configuration $b$, there exists a unique stable configuration, denoted by $\pi(b)$, such that  $b \overset{*}{\rightarrow} \pi(b)$. Moreover, for any configuration $b'$ such that $b \overset{*}{\rightarrow} b'$, we have $\pi(b') = \pi(b)$ (see \cite{goles02} for details). We denote $0^\omega$ the infinite sequence of zeros which is necessary to explicitly write the value of a configuration. An example of evolution is pictured on figure \ref{fig:example}. For convenience, we denote $N$ the initial configuration $(N,0^\omega)$, such that $\pi(N)$ is the height difference representation of the fixed point associated to the initial configuration composed of $N$ stacked grains (see figure \ref{fig:lattice} for an illustration). This paper is devoted to the study of $\pi(N)$ according to $N$.

\begin{figure}
  \begin{center}
    \includegraphics[width=\textwidth]{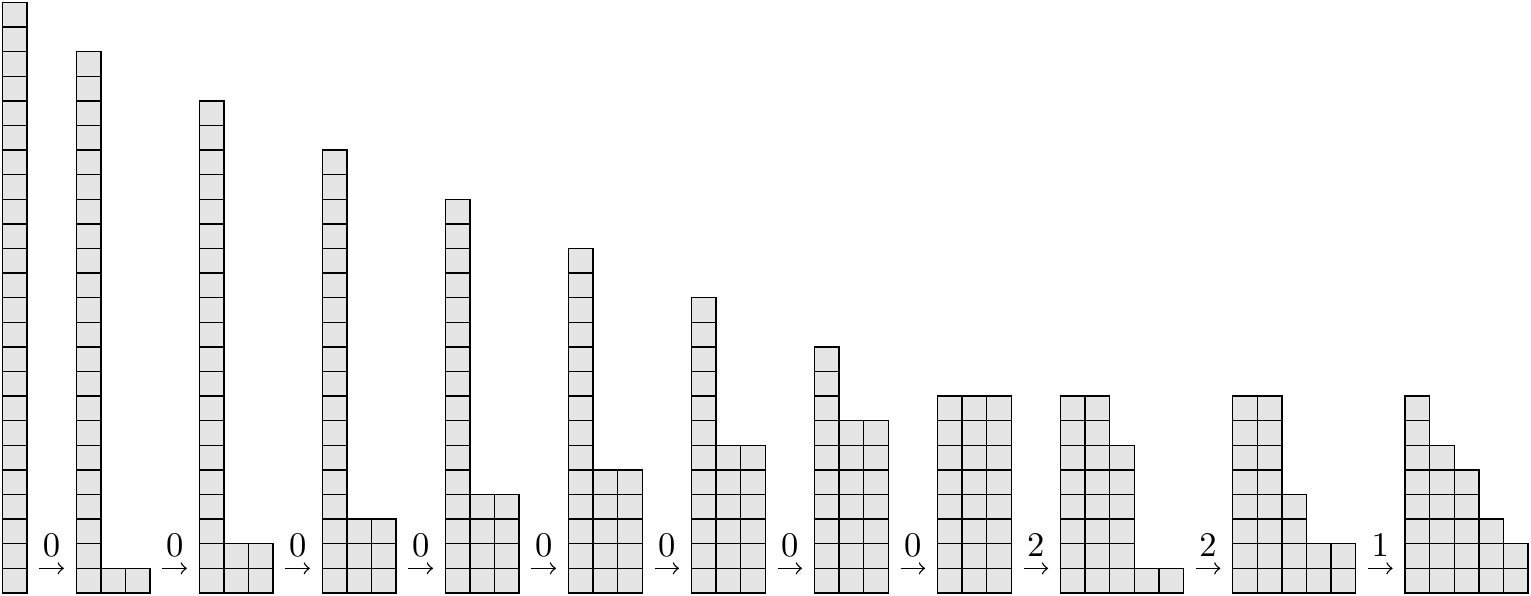}
  \end{center}
  \caption{A possible evolution in KSPM($2$) from the initial configuration for $N=24$ to $\pi(24)$. $\pi(24)=(2,1,2,1,2,0^\omega)$ and its shot vector is $(8,1,2,0^\omega)$.}
  \label{fig:example}
\end{figure}

\begin{figure}
  \begin{center}
    \includegraphics{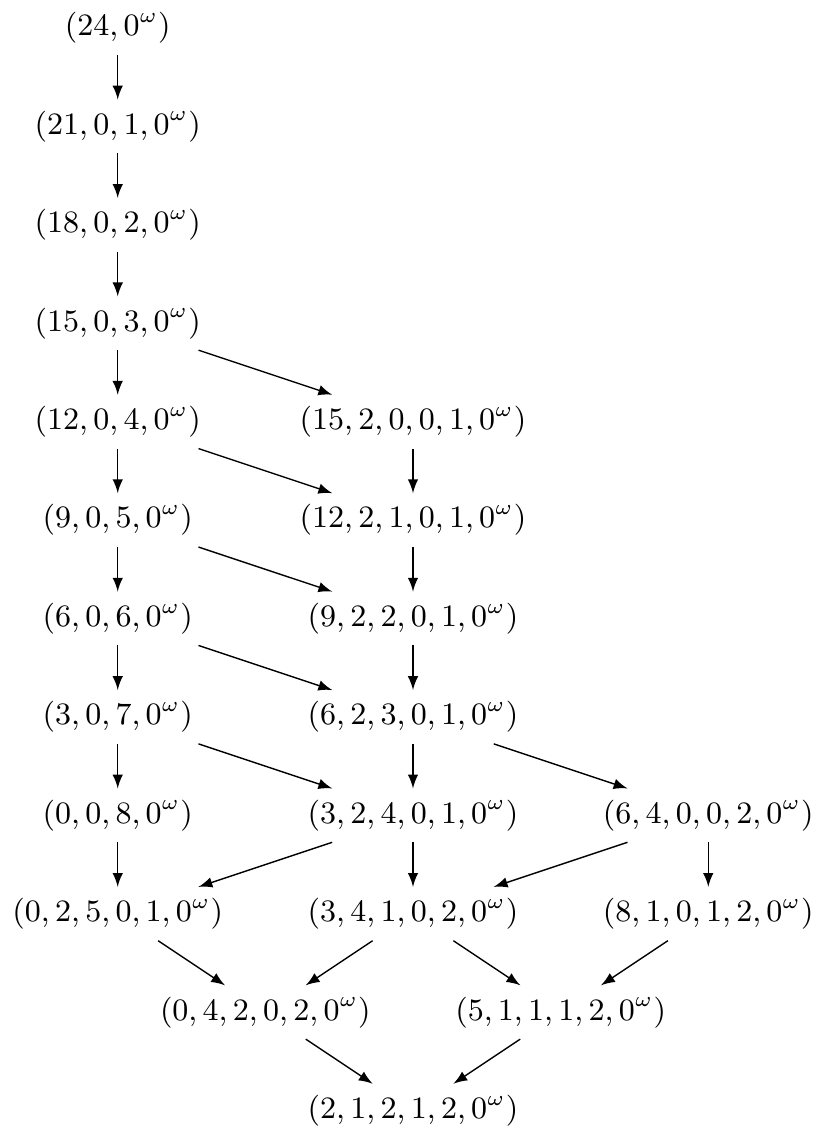}
  \end{center}
  \caption{The set of reachable configurations for $p=2$ and $N=24$. The initial configuration is on the top, and the unique fixed on the bottom.}
  \label{fig:lattice}
\end{figure}

A useful representation of a configuration reachable from $(N,0^\omega)$ is its {\em shot vector} $(a_i)_{i \in \N}$, where $a_i$ is the number of time that the rule has been applied on column $i$ from the initial configuration (see figure \ref{fig:example} for an example). This representation will play a major role in the following.


\subsection{Chip firing game}\label{ss:cfg}

Sand pile models are specializations of {\em Chip Firing Games} (CFG). A CFG is played on a directed graph in which each vertex $v$ has a load $l(v)$ and a threshold $t(v)=deg^+(v)\footnote{$deg^+(v)$ denotes the out-degree of $v$.}$, and the iteration rule is: if $l(v)\geq t(v)$ then $v$ gives one unit to each of its neighbors (we say $v$ is fired). As a consequence, we inherit all  properties of CFGs. 

Kadanoff sand pile is referred to as a {\em linear chip firing game} in \cite{goles02}. The authors show that the set of reachable configurations endowed with the order induced by the successor relation has a lattice structure, in particular it has a unique {\em fixed point}. Since the model is non-deterministic, they also prove \emph{strong convergence} {\em i.e.} the number of iterations to reach the fixed point is the same whatever the evolution strategy is. The morphism from KSPM(2) to CFG is depicted on figure \ref{fig:lcfg}.

\begin{figure}
  \begin{center}
    \includegraphics{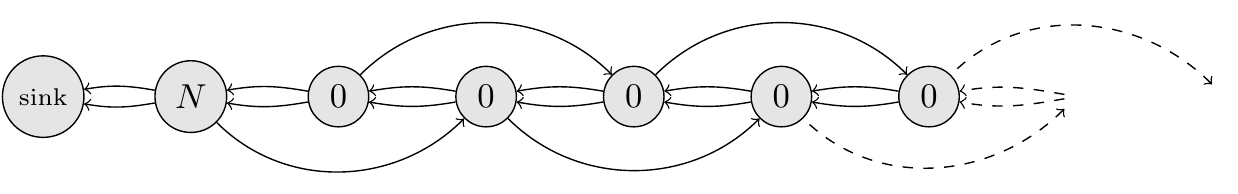}
  \end{center}
  \caption{The initial configuration of KSPM(2) is presented as a CFG where each vertex corresponds to a column (except the sink, vertices from left to right corresponds to columns $0,1,2,3,\dots$) with a load equal to the difference of height between column $i$ and $i+1$. For example the vertex with load $N$ is the difference of height between column 0 ($N$ grains) and column 1 ($0$ grain).}
  \label{fig:lcfg}
\end{figure}

When reasoning and writing formal developments about KSPM, it is convenient to think about its CFG representation where local rules let units of height differences move between columns.


\subsection{Hourglass, inductive computation and avalanches}\label{ss:hourglass}

In order to compute $\pi(N)$, the basic procedure is to start from the initial configuration $(N,0^\omega)$ and perform all the possible transitions. However, it also possible to start from the configuration $(0^\omega)$, add one grain on column 0 and perform all the possible transitions, leading to $\pi(1)$, then add another grain on column 0 and perform all the possible transitions, leading to $\pi(2)$, etc... And repeat this process until reaching $\pi(N)$.

Formally, let $b$ be a configuration, $b^{\downarrow 0}$ denotes the configuration obtained by adding one grain on column 0. In other words, if $b=(b_0,b_1,\dots)$ then $b^{\downarrow 0}=(b_0 +1 ,b_1,\dots)$. The correctness of the process described above relies on the fact that
$$(k,0^\omega) \overset{*}{\to} \pi(k-1)^{\downarrow 0}$$
Indeed, there exists a sequence of firings, named a {\em strategy}, $(s_i)_{i=1}^{i=l}$ such that $(k-1,0^\omega) \overset{s_1}{\to} \dots \overset{s_l}{\to} \pi(k-1)$. It is obvious that using the same strategy we have $(k,0^{\omega}) = (k-1,0^\omega)^{\downarrow 0} \overset{s_1}{\to} \dots \overset{s_l}{\to} \pi(k-1)^{\downarrow 0}$ since we only drag one more grain on column 0 along the evolution, which does not prevent any firing (see figure \ref{fig:inductive}). Thus, with the uniqueness of the fixed point reachable from $(k,0^\omega)$, we have the recurrence formula:
$$\pi(\pi(k-1)^{\downarrow 0})  =   \pi (k)$$
with the initial condition $\pi (0) = 0^\omega$, enabling an inductive computation of $\pi(k)$.

\begin{figure}
  \centering \includegraphics{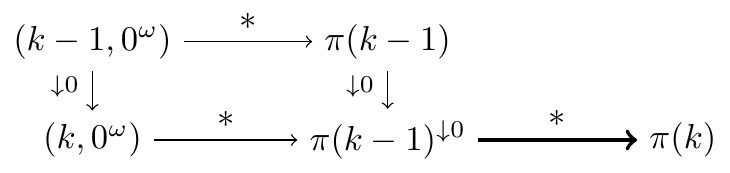}
  \caption{inductive computation of $\pi(k)$ from $\pi(k-1)$. The bold arrow represents an avalanche.}
  \label{fig:inductive}
\end{figure}

The strategy from $\pi(k-1)^{\downarrow 0}$ to $\pi(k)$ is called an {\em avalanche}. Note that from the non-determinacy of the model, this strategy is not unique. To overcome this issue, it is natural to distinguish a particular one which we think is the simplest: the {\em $k^{th}$ avalanche} $s^k$ is the leftmost strategy from $\pi(k-1)^{\downarrow 0}$ to $\pi(k)$, where {\em leftmost} is the minimal strategy according to the lexicographic order. This means that at each step, the leftmost possible firing is performed. A preliminary result of \cite{LATA} is that any column is fired at most once during an avalanche, which allows to write without ambiguity for an index $i$: $i \in s^k$ or $i \notin s^k$.

\begin{figure}
  \begin{center}
    \includegraphics{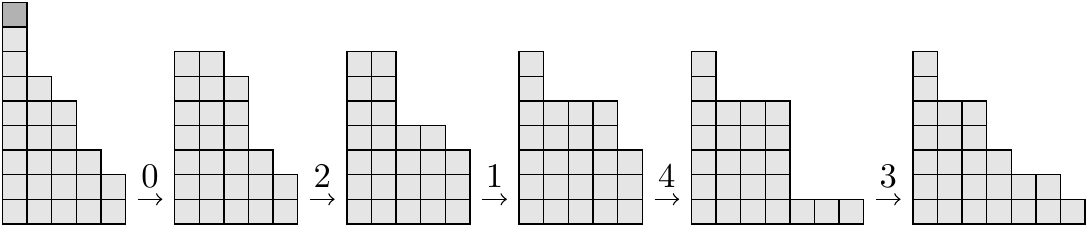}
  \end{center}
  \caption{An example of avalanche: starting from $\pi(24)$, we add one grain on column 0 (darkened on the leftmost configuration) and apply the iteration rule until reaching $\pi(25)$. Arrows are labelled by the index of the fired column (the leftmost unstable column is fired at each step).}
  \label{fig:avalanche25}
\end{figure}

A recent study (\cite{goles10}) showed that in the two dimensional case the avalanche problem (given a configuration and two indices $i$ and $j$, does adding one grain on column $i$ has an influence on index $j$?) on KSPM is \textbf{P}-complete, which points out an inherently sequential behavior.


\subsection{Objectives of the paper}\label{ss:objective}

Let us now introduce notations and state the results. Regarding regular expressions, let $\epsilon$ denotes the empty word, and $\cdot$ (respectively $+$) denote the {\em concatenation} (respectively {\em or}) operator. $\bigplus$ is to the {\em or} what $\Sigma$ is to the {\em sum}\footnote{$\bigplus \limits_{i=0}^{k} 0^i=\epsilon+0+00+\dots+\underset{k}{\underbrace{0\dots0}}$.}, and $^*$ is the Kleene star denoting finite repetitions of a regular expression (see for example \cite{hopcroft} for details). Finally, for a configuration $b$ we denote $b_{[n,\infty[}$ the infinite subsequence of $b$ starting from index $n$ to $\infty$. The study presented in this paper leads to the result:

\setcounter{mytheorem}{\value{theoremc}}
\begin{theorem}\label{theorem:main}
  There exists an $n$ in $\O(\log N)$ such that
  $$\pi(N)_{[n,\infty[} \in \left(\left(\bigplus \limits_{i=0}^{p+1} 0^i\right) \cdot p \cdot \ldots \cdot 2 \cdot 1\right)^*0^\omega$$
\end{theorem}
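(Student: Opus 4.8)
### Proof Strategy

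\textbf{Overview.} The plan is to follow the inductive/hourglass procedure described in subsection \ref{ss:hourglass}: build $\pi(N)$ by successive avalanches $s^k$, and track a compact encoding of the ``active region'' of the configuration as $k$ grows. Since the target regular expression says that from some index $n$ onward the fixed point is a near-periodic repetition of the block $p\cdot(p-1)\cdots 2\cdot 1$ separated by short runs of zeros (of length at most $p+1$), the goal is to show this wavy tail stabilizes quickly while only an $\O(\log N)$-long prefix remains uncontrolled. The key mechanism is the one flagged in the introduction: reduce the dynamics of the avalanche fronts to a quasi-linear contracting map on $\Z^{p}$ (obtained after first passing through a system on $\Z^{p+1}$ coming from comparing the height-difference and shot-vector representations), and exploit contraction to get geometric — hence logarithmic in $N$ — convergence.

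\textbf{Step 1: from configurations to a low-dimensional dynamical system.} First I would relate the height-difference representation $b$, the shot vector $(a_i)$, and the positions of the successive avalanche fronts. Using the local rule of Definition \ref{definition2} one can express $\pi(N)$ in terms of the shot vector of the whole stabilization; the shot vector is itself essentially determined by where each avalanche $s^k$ stops. I would encode the state of the process after $k$ grains by the vector of gaps between the last $p+1$ (resp. $p$) ``turning points'' of the staircase, show that adding one grain and stabilizing acts on this vector by an affine map whose linear part is a fixed contraction (spectral radius $<1$), followed by a rounding to stay in the integer lattice. This is the $\Z^{p+1}\to\Z^{p}$ reduction announced in subsections \ref{ss:dds}--\ref{ss:version}; the honest work here is checking that the leftmost avalanche really does interact with only a bounded-width window near its front, so that the reduced state is well-defined and the recurrence closes.

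\textbf{Step 2: contraction implies fast convergence of the tail.} Given the quasi-linear map $x\mapsto \lfloor Lx+c\rfloor$ with $\|L\|<1$, the orbit of the relevant starting point converges: distances to the (eventually periodic) attractor shrink geometrically, so after $\O(\log N)$ steps — equivalently, once we look at indices beyond some $n=\O(\log N)$ — the reduced state has entered its limit cycle. Translating the limit cycle of the reduced system back to the configuration language yields exactly a periodic alternation of a full decreasing block $p\,(p-1)\cdots 1$ with a gap whose length is forced into $\{0,1,\dots,p+1\}$ by the rounding/admissibility constraints (non-negativity of $b_i$ and the $b_i>p$ firing condition). The bound $n\in\O(\log N)$ comes from: the staircase has $\Theta(\sqrt N)$-ish spatial extent but the number of distinct ``scales'' / the number of avalanches that genuinely change the front structure before periodicity sets in is logarithmic, because each such change multiplies a relevant quantity by a constant factor $>1$.

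\textbf{Main obstacle.} The hard part will be controlling the rounding. A linear contracting map converges trivially, but $\lfloor\cdot\rfloor$ can in principle keep the orbit from ever being exactly periodic, or create spurious transient behavior of length not obviously $\O(\log N)$; worse, one must rule out that the rounding errors, though individually $O(1)$, accumulate across the $\Theta(\sqrt N)$ columns of the pile. The resolution I would aim for is a potential/invariant argument: identify a quantity (a suitable norm of the deviation of the reduced state from the attractor) that is strictly decreased by the combined ``contract-then-round'' step whenever it exceeds a constant threshold, and is bounded by $O(1)$ forever after — this both gives the geometric rate and pins down the finite menu of admissible gap-lengths $0^0,0^1,\dots,0^{p+1}$. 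Establishing such an invariant uniformly in $p$, rather than case-by-case as was done for $p=2$ in \cite{LATA,MFCS}, is the crux, and is presumably where subsections \ref{ss:study}--\ref{ss:wave} do their work.
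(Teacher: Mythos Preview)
Your proposal misidentifies the dynamical system at the heart of the argument. You propose to iterate over avalanches --- encoding the state after $k$ grains and studying how adding one more grain acts on it --- but the paper's system iterates over \emph{column indices} within the single fixed point $\pi(N)$. Concretely, from the relation $b_n = a_{n-p} - (p+1)a_n + p\,a_{n+1}$ (linking height differences to the shot vector of $\pi(N)$) one obtains a recurrence for $X_n = (a_{n-p},\dots,a_n)^T$ in the spatial index $n$; after a basis change and projection this becomes the averaging system $Y_{n+1}=MY_n+\frac{b_n}{p}K$, where $Y_n$ records the consecutive shot-vector differences $a_{n-i+1}-a_{n-i}$. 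The initial vector $Y_0$ has norm $\Theta(N)$, the linear part $M$ has spectral radius $<1$, and so after $\O(\log N)$ \emph{column} steps $Y_n$ has shrunk to $O(1)$ --- this is exactly how the bound $n\in\O(\log N)$ arises. Your sketch, by contrast, must extract $\O(\log N)$ from $N$ avalanches, and the sentence ``the number of avalanches that genuinely change the front structure before periodicity sets in is logarithmic'' is where the argument would break: that claim is neither obvious nor what is proved, and the introduction says explicitly that the avalanche-tracking route of \cite{LATA,MFCS} succeeded only for $p=2$.

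Your obstacle paragraph correctly isolates the rounding as the crux, but the resolution is not a potential function on an avalanche-indexed orbit. The perturbation is not a generic floor: integrality of $Y_n$ \emph{nearly determines} $b_n$ (the remark following equation~(\ref{eq:averaging})), and the convergence is a two-stage argument --- first spectral contraction brings $\overline{y}_n-\underline{y}_n$ below a constant (Lemma~\ref{lemma:large}), then a separate, purely discrete linear-decrease step (Lemma~\ref{lemma:small}) closes the last $O(1)$ gap to an exactly constant vector in $O(1)$ further columns. Once $Y_n$ is constant, the wave pattern is forced deterministically (Lemma~\ref{lemma:yn2wave}), and the bound $p+1$ on zero-runs comes from the plateau lemma (Lemma~\ref{lemma:plateau}), not from the rounding analysis itself. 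Avalanches enter only afterwards, in the proof of Theorem~\ref{theorem:main2}.
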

Some additional work will prove the more precise statement (named {\em Snowball Conjecture} in \cite{MFCS}):

\setcounter{mytheorem2}{\value{theoremc}}
\begin{theorem}\label{theorem:main2}
  There exists an $n$ in $\O(\log N)$ such that
  $$\pi(N)_{[n,\infty[} \in (p \cdot \ldots \cdot 2 \cdot 1)^*[0](p \cdot \ldots \cdot 2 \cdot 1)^*0^\omega$$
  where $[0]$ stands for at most one symbol 0.
\end{theorem}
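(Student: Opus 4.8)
The plan is to pass from the fixed-point configuration $\pi(N)$ to the \emph{shot vector} $a=(a_i)_{i\in\N}$, to run a quasi-linear recurrence along the columns, and to exploit a contraction to show that a clean regime is reached after $\O(\log N)$ columns. \textbf{Step 1 (representations).} By counting, for each column $i$, the firings that affect its height difference, one obtains the \emph{column equation} $\pi(N)_i = p\,a_{i+1}-(p+1)\,a_i+a_{i-p}$ for $i\ge 1$ (with an extra $+N$ at $i=0$, and the convention $a_j=0$ for $j<0$), together with the fixed-point conditions $0\le\pi(N)_i\le p$ and $a_i=0$ for $i$ large; I would also record the crude facts that $a$ is ultimately $0$ and that every $a_i$ is $\mathrm{poly}(N)$ (the model stabilises in polynomially many steps). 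Setting $d_i=a_i-a_{i+1}$, the column equation rewrites as the order-$p$ recurrence $p\,d_i+\pi(N)_i=d_{i-1}+d_{i-2}+\dots+d_{i-p}$.

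\textbf{Step 2 (the quasi-linear system and the contraction).} Read forward in $i$, this recurrence is a dynamical system: the state $(a_i,a_{i-1},\dots,a_{i-p})\in\Z^{p+1}$ --- or, after quotienting the irrelevant ``add a constant to all $a_i$'' direction, the state $(d_{i-1},\dots,d_{i-p})\in\Z^{p}$ --- evolves by a fixed linear map followed by a bounded correction coming from $\pi(N)_i$. Here $\pi(N)_i$ is almost forced: it is the unique element of $\{0,\dots,p\}$ congruent to $a_{i-p}-a_i$ modulo $p$ that makes $a_{i+1}$ a non-negative integer, which pins it down except for the $0$-versus-$p$ ambiguity when that residue is $0$. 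The characteristic polynomial of the $d$-recurrence is $p\,x^{p}-x^{p-1}-\dots-x-1=(x-1)\,g(x)$; the factor $x-1$ is exactly the quotiented direction, and the crucial lemma (the location of the roots of $g$) is that the $p-1$ roots of $g$ all lie strictly inside the unit disk, so that on $\Z^p$ the linear part is a contraction with some rate $\rho=\rho(p)<1$. This is precisely why the $\Z^p$ ($d$-)version, rather than the $\Z^{p+1}$ ($a$-)version, is the right object to study.

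\textbf{Step 3 (convergence, and the theorems).} Variation of constants gives $x_i=M^{\,i}x_0+\sum_{k<i}M^{\,i-1-k}\varepsilon_k$, where $\|x_0\|=\mathrm{poly}(N)$ and the $\varepsilon_k$ are the bounded rounding corrections. The contraction makes the first term $\O(\rho^{\,i}\,\mathrm{poly}(N))$ (up to a harmless polynomial factor from possible Jordan blocks) and bounds the correction tail by a convergent geometric series, hence by a constant $C(p)$; so $x_i$ stays within $C(p)$ of a canonical bounded trajectory (an affine drift plus a period-$p$ oscillation, since the forcing $\pi(N)_i$ resonates with the eigenvalue $1$), and the deviation of $\pi(N)_i$ from the corresponding canonical value is $\O(\rho^{\,i}\,\mathrm{poly}(N))$. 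For $i\ge n$ with $n=\O(\log N)$ this deviation is $<\tfrac12$; both quantities being integers, they are equal, so $\pi(N)$ agrees with the canonical trajectory from column $n$ on, which --- after absorbing the $0$-versus-$p$ slack into the leading-block shape $0^{j}\,p(p-1)\cdots1$ with $j\le p+1$ --- yields Theorem~\ref{theorem:main}. For Theorem~\ref{theorem:main2} one sharpens this into a rigidity statement about the canonical trajectory itself: because $a$ behaves to leading order like $i^{2}/(2p)$, its increments change by bounded amounts, so the residue $a_{i-p}-a_i\bmod p$ can take the value $0$ ``non-trivially'' at most once before $a$ hits $0$; hence at most one block of the periodic pattern carries an extra symbol $0$ and no block ever loses one, which (with an explicit description of how the quadratic shot vector lands on $0$ at the right end) gives the form $(p\cdots1)^*[0](p\cdots1)^*0^\omega$.

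\textbf{Main obstacle.} The delicate part is controlling the \emph{rounding} in Steps~2--3: a priori the bounded corrections $\varepsilon_k$ could pile up, and the integer constraints could conflict with the contraction. One must (i) prove the root location of $g$ for every $p$, and (ii) upgrade ``$x_i$ is within $C(p)$ of the canonical trajectory'' to ``$\pi(N)_i$ equals the canonical value'', using that $\pi(N)_i$ is an integer and essentially forced --- this last passage from ``small deviation'' to ``no deviation'', straightforward for $p=2$ but subtle in general, is where the real work lies, and it is exactly what the reduction to the $\Z^p$ system makes tractable.
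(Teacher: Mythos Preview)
Your route to Theorem~\ref{theorem:main} (Steps~1--3 up to the wavy pattern with $0^{\le p+1}$ between waves) is essentially the paper's: same column equation, same passage from $\Z^{p+1}$ to the $\Z^p$ averaging system via $d_i=a_i-a_{i+1}$, same contraction via the root location of $g$, and the same ``nearly determined $b_n$'' observation.

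The genuine gap is in your passage from Theorem~\ref{theorem:main} to Theorem~\ref{theorem:main2}. Once the averaging state $Y_n=(d_{n-1},\dots,d_{n-p})$ is a constant vector, the residue $a_{n-p}-a_n\equiv b_n\pmod p$ is $0$ at \emph{every} step where $b_n\in\{0,p\}$, which in the wavy regime is the start of \emph{every} wave as well as every inserted $0$; so the residue is $0$ very often, not ``at most once''. The thing you actually need --- that the branch $b_n=0$ (rather than $b_n=p$) is taken at most once --- is precisely the $0$-versus-$p$ ambiguity your own Step~2 flagged as undetermined by the local state, and nothing in ``$a$ is roughly $i^2/(2p)$'' or ``increments change by bounded amounts'' resolves it: those facts are compatible with any placement of the $0^{\le p+1}$ gaps allowed by Theorem~\ref{theorem:main}. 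In short, the contraction argument exhausts its content at Theorem~\ref{theorem:main}; it cannot see which branch the automaton takes at the grey node.

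The paper's proof of Theorem~\ref{theorem:main2} takes a completely different, \emph{dynamic} route that your proposal does not contain: it first derives from Theorem~\ref{theorem:main} a corollary about avalanches (each avalanche $s^{k}$ is \emph{dense} --- fires a block of consecutive columns --- from an index in $\O(\log k)$), and then proves Theorem~\ref{theorem:main2} by induction on $N$, tracking how the $(N{+}1)^{\text{th}}$ avalanche acts on the pattern $0^i(p\cdots1)^x\,0\,(p\cdots1)^y$ of $\pi(N)$ and showing it can only shift one wave across the single $0$. The paper even remarks explicitly that the static fixed-point study does not seem sufficient and that one must ``consider the dynamic of sand grains \dots from $\pi(0)$ to $\pi(N)$''. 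So for Theorem~\ref{theorem:main2} you are missing the key idea (density of avalanches and the inductive avalanche computation), not merely a detail.
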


Graphical representations of the Theorems are given on figures \ref{fig:wave1} and \ref{fig:wave2}. The example of $\pi(2000)$ for $p=4$ is given on figure \ref{fig:2000-h} of appendix \ref{a:2000}.

\captionsetup[subfigure]{width=.3\textwidth}
\begin{figure}
  \centering
  \subfloat[We call {\em wave} the pattern \hspace{.6cm}$p \cdot \ldots \cdot 2 \cdot 1$ in a sand pile configuration.]{\label{fig:wave}\hspace{1cm}\includegraphics{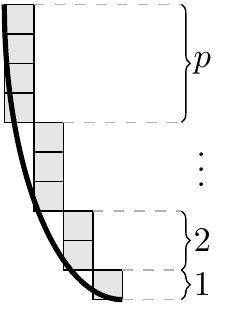}\hspace{1.5cm}}
  \subfloat[Theorem \ref{theorem:main}. Illustration of the regular expression $\left(\left(\bigplus \limits_{i=0}^{p+1} 0^i\right) \cdot p \cdot \ldots \cdot 2 \cdot 1\right)^*$ \hspace{.7cm} as a sand pile configuration.]{\label{fig:wave1}\includegraphics[width=4cm]{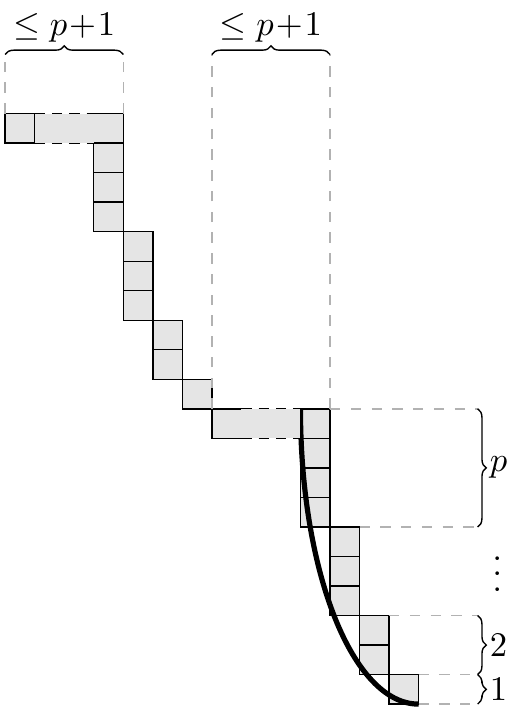}}
  \subfloat[Theorem \ref{theorem:main2}. From an index in $\O(\log N)$ the fixed point $\pi(N)$ consists of waves all consecutive to each other, except at at most one place where two waves may be separated by a 0.]{\label{fig:wave2}\hspace{.5cm}\includegraphics{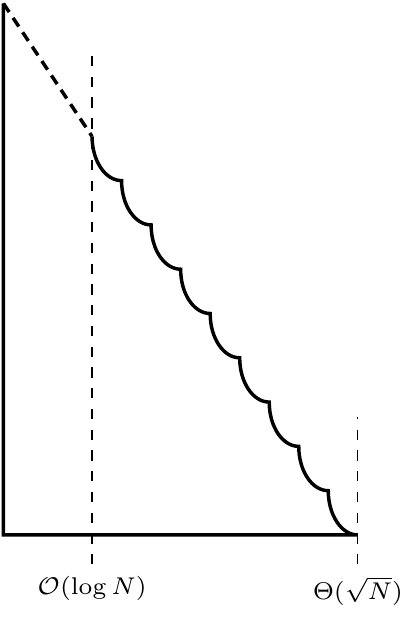}}
  \caption{A {\em wave} (figure \ref{fig:wave}), and graphical representations of Theorem \ref{theorem:main} (figure \ref{fig:wave1}) and Theorem \ref{theorem:main2} (figure \ref{fig:wave2}).}
\end{figure}


 
\section{Once upon a time, a snowball}\label{s:snowball}

We consider the parameter $p$ to be fixed. The proof of Theorem \ref{theorem:main} works as follows. We begin by establishing a relation between the height difference representation and the shot vector, leading to the construction of a DDS in $\Z^{p+1}$, such that the orbit of a well chosen point according to the number of grains $N$ describes the fixed point $\pi(N)$ (subsection \ref{ss:dds}). The aim is then to prove the convergence of this orbit in $\O(\log N)$ steps, such that the shot vector values it embeds involve the form described in Theorem \ref{theorem:main} (subsections \ref{ss:version}, \ref{ss:study} and \ref{ss:wave}).

\subsection{Linking height differences and shot vector in a dynamical system}\label{ss:dds}

In this subsection we construct a discrete dynamical system in $\Z^{p+1}$ such that the orbit of a particular point (chosen according to $N$) describes $\pi(N)$.

A fixed point $\pi(N)$ can be represented as a sequence of height differences $(b_i)_{i \in \N}$ ({\em i.e.}, $b_i=\pi(N)_i$ for all $i$) or as a shot vector $(a_i)_{i \in \N}$. Those two representations are obviously linked in various ways. In particular for any $n$ we can compute the height difference at index $n$ provided the number of firings at $n-p$, $n$ and $n+1$ because $b_n$ is initially equal to 0 (the case $n=0$ is discussed below), and: a firing at $n-p$ increases $b_n$ by 1; a firing at $n$ decreases $b_n$ by $p+1$; a firing at $n+1$ increases $b_n$ by $p$. Therefore, $b_n = a_{n-p}  - (p+1) a_n + p a_{n+1}$, with $ 0 \leq b_n  \leq p$ since $\pi(N)$ is a fixed point, and thus 
$$a_{n+1} =   -\frac{1}{p} a_{n-p} + \frac {p+1}{p} a_n +  \frac {1}{p}b_n$$

This equation expresses the value of the shot vector at position $n+1$ according to its values at positions $n-p$ and $n$, and a bounded perturbation $0 \leq \frac{b_n}{p} \leq 1$. We take as an initial condition $a_{-p}=N$ and $a_i=0$ for $-p<i<0$, representing the fact that the column 0 is the only one receiving $N$ units of height difference.

\begin{remark}\label{remark:determined}
Note that $a_{n+1} \in \N$, thus $-a_{n-p}+(p+1)a_n+b_n \equiv 0 \mod p$. As a consequence, the value of $b_n$ is {\em nearly determined}: given $a_{n-p}$ and $a_n$, there is only one possible value of $b_n$, except when $-a_{n-p} + (p+1) a_n \equiv 0 \mod p$ in which case $b_n$ equals $0$ or $p$.
\end{remark}

For example, consider $\pi(2000)$ for $p=4$ (see appendix \ref{a:2000}). We have $a_8=120$ and $a_4=189$, so $-a_4+5a_8=411 \equiv 3 \mod p$. From this knowledge, $b_8$ is determined to be equal to $1$, so that $a_9=-\frac{1}{4} a_4 + \frac{5}{4} a_8 +  \frac {1}{4}b_8=103$ is an integer.

We rewrite this relation as a linear system we can manipulate easily. $a_{n+1}$ is expressed in terms of $a_{n-p}$ and $a_n$, so we construct a sequence of vectors $(X_i)_{i \in \N}$ with $X_i \in \N^{p+1}$ and such that $X_n =(a_{n-p}, a_{n-p+1}, \dots, a_n)^T$ where $v^T$ stands for the transpose of $v$. Note that we consider only finite configurations, so there always exists an integer $n_0$ such that $X_n=\0$ for $n_0 \leq n$, with $\0=(0, \dots, 0)^T$.

Given $X_n$ and $b_n$ we can compute $X_{n+1}$ with the relation
$$X_{n+1}=A \, X_n + \frac {b_n}{p} J \hspace{.7cm}\text{with}\hspace{.7cm} A = \begin{pmatrix} 0 & 1 & & 0 & 0\\ & & \ddots & & \\ 0 & 0 & & 1 & 0\\ 0 & 0 & & 0 & 1\\ -\frac{1}{p} & 0 & & 0 & \frac {p+1}{p} \end{pmatrix} \hspace{.5cm} J=\begin{pmatrix}0\\\vdots\\0\\0\\1\end{pmatrix}$$
in the canonical base $B=(e_0,e_1,\dots,e_p)$, with $A$ a $p+1$ square matrix\footnote{As a convention, blank spaces are 0s and dotted spaces are filled by the sequence induced by its endpoints.}.

This system expresses the shot vector around position $n+1$ (via $X_{n+1}$) in terms of the shot vector around position $n$ (via $X_n$) and the height difference at $n$ (via $b_n$). Thus the orbit of the point $X_0=(N,0,\dots,0,a_0)$ in $\N^{p+1}$ describes the fixed point $\pi(N)$.

Note that it may look odd to study the sequence $(b_i)_{i \in \N}$ using a DDS that presupposes a knowledge on $(b_i)_{i \in \N}$. It is actually helpful because of the underlined fact that the value $b_n$ is {\em nearly determined} (remark \ref{remark:determined}): in the following we will make no supposition on the sequence $(b_i)_{i \in \N}$ (except that $0 \leq b_i \leq p$ for all $i$) and prove that the system converges exponentially quickly in $N$, such that from an $n$ in $\O(\log N)$ the sequence $(b_i)_{i \geq n}$ is {\em determined} to have a regular wavy shape.

The system we get is a linear map plus a perturbation induced by the discreteness of the values of height differences. Though the perturbation is bounded by a global constant at each step ($b_n \leq p$ for all $n$ since $\pi(N)$ is a fixed point), it seems that the non-linearity prevents classical methods to be conclusive concerning the convergence of this model.

We denote $\phi$ the corresponding transformation from $\Z^{p+1}$ to $\Z^{p+1}$, which is composed of two parts: a matrix and a perturbation. Let $R(x)=x^{p-1} + \frac{p-1}{p}x^{p-2} + \dots + \frac{2}{p} x+ \frac{1}{p}$, the characteristic polynomial of $A$ is $(1-x)^2R(x)$. We can first notice that $1$ is a double eigenvalue. A second remark, which helps to get a clear picture of the system, is that all the other eigenvalues are distinct and lesser than 1 from Lemma \ref{lemma:roots} (using a bound by Enerstr\"om and Kakeya \cite{enestromkakeya}). Therefore there exists a basis such that the matrix of $\phi$ is in Jordan normal form with a Jordan block of size 2. Then, we could project on the $p-1$ other components to get a diagonal matrix for the transformation, hopefully exhibiting an understandably contracting behavior. We won't exactly follow this approach, but will use these remarks in subsection \ref{ss:study}.

We tried to express the transformation $\phi$ in a basis such that its matrix is in Jordan normal form, but we did not manage to handle the effect of the perturbation expressed in such a basis. Therefore, we rather express $\phi$ in a basis such that the matrix and the perturbation act harmoniously. The proof of the main Theorem is done in three steps:
\begin{enumerate}
  \item the construction of a new dynamical system: we first express $\phi$ is a new basis $B'$, and then project along one component (subsection \ref{ss:version});
  \item the behavior of this new dynamical system is easily tractable, and we will see that it converges exponentially quickly (in $\O(\log N)$ steps) to a constant vector (subsection \ref{ss:study});
  \item finally, we prove that as soon as the vector is constant, then the wavy shape of Theorem \ref{theorem:main} takes place (subsection \ref{ss:wave}).
\end{enumerate}


\subsection{A new version of the dynamical system \dots}\label{ss:version}

From the dynamical system $X_{n+1}=A \, X_n +\frac {b_n}{p} J$ in the canonical basis $B$, we construct a new dynamical system for $\phi$ in two steps: first we change the basis of $\Z^{p+1}$ in which we express $\phi$, from the canonical one $B$, to a well chosen $B'$; then we project the transformation along the first component of $B'$. The resulting system on $\Z^p$, called {\em averaging system}, is very easily understandable, very intuitive, and the proof of its convergence to a constant vector can then be concluded straightforwardly. Let

$$
B'=
\begin{pmatrix}
1 & 0 & & 0\\
1 & 1 & & 0\\
\vdots & \vdots & \ddots & \\
1 & 1 & \dots & 1
\end{pmatrix}
\hspace{1cm}
B'^{-1}=
\begin{pmatrix}
  1 & & 0& 0\\
  -1 & \ddots & 0 & 0 \\
   & \ddots & \ddots & \\
   0 & & -1 & 1\\
\end{pmatrix}
$$
be square matrices\footnote{As a convention, blank spaces are 0s and dotted spaces are filled by the sequence induced by its extremities.} of size $p+1$. $B'=(e'_0,\dots,e'_p)$ (with $e'_i$ the $(i+1)^{th}$ column of the matrix $B'$) is a basis of $\Z^{p+1}$, and we have
$$
\begin{array}[t]{rrcl}
& B'^{-1} \, X_{n+1} & = & B'^{-1} \, A \, B' \, B'^{-1} \, X_n + \frac{b_n}{p} B'^{-1} \, J\\
\iff &X'_{n+1} & = & A' \, X'_n + \frac{b_n}{p} J'
\end{array}
$$
with
$$
X'_n = B'^{-1} \, X_n
\hspace{.7cm}
A' = B'^{-1} \, A \, B' =
\begin{pmatrix}
  1 & 1 & & 0\\
   & & \ddots & \\
   0 & 0 & & 1\\
   0 & \frac{1}{p} & \dots & \frac{1}{p}
\end{pmatrix}
\hspace{.5cm}
J' = B'^{-1} \, J =
\begin{pmatrix}
0\\ \vdots \\ 0\\ 1
\end{pmatrix}$$

We now proceed to the second step by projection along $e'_0$. Let $P$ denote the projection in $\Z^{p+1}$ along $e'_0$ onto $\{0\} \times \Z^p$. We can notice that $e'_0$ is an eigenvector of $A'$, hence projecting along $e'_0$ simply corresponds to erasing the first coordinate of $X'_i$. For convenience, we do not write the zero component of objects in $\{0\} \times \Z^p$.

The new DDS we now have to study, which we call {\em averaging system}, is
\begin{eqnarray}
  Y_{n+1} = M \, Y_n + \frac{b_n}{p} K \label{eq:averaging}
\end{eqnarray}
with the following elements in $\Z^p$ (in $\{0\} \times \Z^p$)
$$
Y_n = P \, X'_n
\hspace{1cm}
M=P \, A'=
\begin{pmatrix}
   0 & 1 & & 0\\
   & & \ddots & \\
   0 & 0 & & 1\\
   \frac{1}{p} & \frac{1}{p} & \dots & \frac{1}{p}
\end{pmatrix}
\hspace{1cm}
K=P \, J'=
\begin{pmatrix}
0\\ \vdots \\ 0\\ 1
\end{pmatrix}
$$


Let us look in more details at $Y_n$ and what it represents concerning the shot vector. We have $X_n=(a_{n-p},a_{n-p+1},\dots,a_n)^T$, thus
$$Y_n=P \, B'^{-1} \, X_n=
\begin{pmatrix}
a_{n-p+1}-a_{n-p}\\
\vdots\\
a_{n-1}-a_{n-2}\\
a_n-a_{n-1}
\end{pmatrix}
\text{ and for initialization }
Y_0=
\begin{pmatrix}
-N\\
0\\
\vdots\\
0\\
a_0
\end{pmatrix}$$
$Y_n$ represents differences of shot vector, which may of course be negative. In the next subsection \ref{ss:study} we will see that the averaging system is easily tractable and converges exponentially to a constant vector. Subsection \ref{ss:wave} then concentrates on implications following the presence of a constant vector {\em i.e.}, the creation and maintenance of a wavy shape.


\subsection{\dots which is easy to tackle \dots}\label{ss:study}

The averaging system is understandable in simple terms. From a vector $Y_n$ of $\Z^p$, $Y_{n+1}$ is obtained by
\begin{enumerate}
  \item shifting all the values one row upward;
  \item for the bottom component, computing the mean of values of $Y_n$, and adding a small perturbation (a multiple of $\frac{1}{p}$ between 0 and 1) to it.
\end{enumerate}

\begin{remark}
$(Y_i)_{i \in \N}$ are once more integer vectors, hence the perturbation added to the last component is again nearly determined: let $y_n$ denote the mean of the values of $Y_n$, we have $(y_n + \frac{b_n}{p}) \in \Z$ and $0 \leq \frac{b_n}{p} \leq 1$. Consequently, if $y_n$ is not an integer then $b_n$ is determined and equals $p(\lceil y_n \rceil - y_n)$, otherwise $b_n$ equals 0 or $p$.
\end{remark}

For example, consider $\pi(2000)$ for $p=4$ (see figure \ref{fig:2000-dsv} of appendix \ref{a:2000}, be careful that it pictures $a_n-a_{n+1}$ at position $n$). We have $Y_{13}=(-3,-5,-7,-7)^T$, then $y_{13}=-\frac{11}{2}$ and $b_{13}$ is forced to be equal to 2 so that $Y_{14}=(-5,-7,-7,-5)^T$ is an integer vector.


We can foresee what happens as we iterate this dynamical system and new values are computed: on a {\em large scale} ---when values are large compared to $p$--- the system evolves roughly toward the mean of values of the initial vector $Y_0$, and on a {\em small scale} ---when values are small compared to $p$--- the perturbation lets the vector wander a little around. Previous developments where intending to allow a simple argument to prove that this tiny wanderings does not prevent the exponential convergence towards a constant vector.

The study of the convergence of the averaging system works in three steps:
\begin{enumerate}
\item[(i).] state a linear convergence of the whole system; then express $Y_n$ in terms of $Y_0$ and $(b_i)_{0 \leq i \leq n}$;
\item[(ii).] isolate the perturbations induced by $(b_i)_{0 \leq i \leq n}$ and bound it by a constant;
\item[(iii).] prove that the other part (corresponding to the linear map $M$) is contracting exponentially quickly.
\end{enumerate}
From (ii) and (iii), a point evolves exponentially quickly to a ball of constant radius, then from (i) this point needs a constant number of more iterations in order to reach the center of the ball, that is, a constant vector.

For the sake of clarity, proofs of Lemmas are only sketched within the proof of the following Proposition, and are fully proven just below.

\begin{proposition}\label{lemma:average}
  There exists an $n$ in $\O(\log N)$ such that $Y_n$ is a constant vector.
\end{proposition}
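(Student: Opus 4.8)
The plan is to follow the three-step roadmap announced just before the statement, turning the ``large scale / small scale'' intuition into a quantitative estimate.

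\textbf{Step (i): a closed formula for $Y_n$.} Iterating the recurrence $Y_{n+1}=M\,Y_n+\frac{b_n}{p}K$ gives the standard expression
$$Y_n = M^n Y_0 + \sum_{k=0}^{n-1} \frac{b_{n-1-k}}{p}\, M^k K.$$
So $Y_n$ splits into a ``free'' term $M^nY_0$ and an accumulated perturbation $S_n=\sum_{k=0}^{n-1}\frac{b_{n-1-k}}{p}M^kK$. The eigenvalues of $M$ are exactly the roots of the reduced characteristic polynomial: since the characteristic polynomial of $A$ is $(1-x)^2R(x)$ and we have quotiented out the eigenvector $e'_0$ with eigenvalue $1$, the spectrum of $M$ is $\{1\}\cup\{\text{roots of }R\}$, and by Lemma~\ref{lemma:roots} the roots of $R$ are simple and have modulus $<1$. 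The eigenvalue $1$ of $M$ is simple, with left eigenvector the all-ones-over-$p$ averaging functional and right eigenvector the all-ones vector $\1$ (indeed $M\1=\1$ because each row of $M$ sums to $1$). Hence $M=\Pi + M_0$ where $\Pi$ is the rank-one spectral projector onto $\R\1$ along the $M$-invariant complement $H$ on which $M_0=M|_H$ has spectral radius $\rho<1$; note also $K\in H$ up to its $\1$-component, more precisely $\Pi K=\frac1p\1$ and $(I-\Pi)K\in H$.

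\textbf{Step (ii): bounding the perturbation.} Write each term of $S_n$ as $\frac{b_{n-1-k}}{p}(\Pi K + (I-\Pi)K)$. The $\Pi$-part contributes $\sum_{k}\frac{b_{n-1-k}}{p}\cdot\frac1p\1$, which is a nonnegative multiple of $\1$ — harmless, it only shifts along the constant direction. The $H$-part is $\sum_{k=0}^{n-1}\frac{b_{n-1-k}}{p}M_0^k(I-\Pi)K$, and since $0\le b_i\le p$ we get $\big\|\sum_k \tfrac{b_{n-1-k}}{p}M_0^k(I-\Pi)K\big\|\le \sum_{k\ge 0}\|M_0^k(I-\Pi)K\|\le \frac{C'}{1-\rho}$ for a constant $C'$ depending only on $p$ (choose a norm adapted to $M_0$ so that $\|M_0^k\|\le C'\rho^k$). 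So on the complement $H$, $S_n$ stays inside a ball of radius $C=C'/(1-\rho)$, a constant independent of $N$, for all $n$.

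\textbf{Step (iii): contraction of the free term, and closing.} On $H$, $\|M^nY_0 - \Pi Y_0\| = \|M_0^n(I-\Pi)Y_0\|\le C'\rho^n\|(I-\Pi)Y_0\|$, and $\|Y_0\|=O(N)$ from the explicit initialization $Y_0=(-N,0,\dots,0,a_0)^T$ together with $0\le a_0\le N$. Hence for $n\ge n_1$ with $n_1 = O(\log N)$ (taking $n_1$ about $\log_{1/\rho}N$ plus a constant), the $H$-component of $M^nY_0$ is within, say, $1$ of $\Pi Y_0$. Combining with Step (ii), for all $n\ge n_1$ the $H$-projection of $Y_n$ lies in a ball of fixed radius $C+1$ centered at $\Pi Y_0$'s $H$-part, which contains only a bounded number of integer points of the lattice $(I-\Pi)(\Z^p)$. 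Now I invoke Step (i) qualitatively: the sequence $(Y_n)$ converges — indeed the true fixed point $\pi(N)$ is a finite configuration, so $Y_n=\0$ eventually, and in particular $Y_n$ is constant from some index on. The point of the $O(\log N)$ bound is that this stabilization must occur quickly: once $n\ge n_1$, the $H$-component visits only finitely many (a $p$-bounded number of) lattice points, and each time $b_n$ is ``nearly determined'' by $y_n$, so the trajectory is essentially forced; a short combinatorial argument (or a direct appeal to the detailed analysis deferred to the Lemmas below) shows it cannot revisit states and must settle within a further $O(1)$ steps. Therefore there is $n=n_1+O(1)=O(\log N)$ with $Y_n$ constant.

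\textbf{Main obstacle.} The delicate point is Step (iii)'s last move: contraction alone gives a constant-radius ball but not literally a single point, because the rounding perturbation genuinely keeps the vector ``wandering'' inside that ball. One must argue that this wandering is eventually trivial — i.e. that among the boundedly many integer states in the ball the dynamics is eventually periodic of period one — and that the transient before this happens is $O(1)$ rather than merely finite. This is exactly where the ``nearly determined'' structure of $b_n$ (the remark preceding the Proposition) does the real work, and where I expect the careful bookkeeping, carried out in the Lemmas that follow, to be needed; the linear-algebra Steps (i)–(ii) are routine by comparison.
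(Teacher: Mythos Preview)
Your Steps (i)--(ii) and the contraction half of Step (iii) are correct and match the paper's approach almost verbatim: the paper introduces the ``distance to the mean'' operator $D=I-\Pi$, shows that the spectral radius of $DM$ is $<1$ (your $M_0$), and bounds the accumulated perturbation by a geometric series to land in a ball of constant radius $\alpha$ after $n_1=\O(\log N)$ steps. So the large-scale analysis is fine.

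The genuine gap is the last move of Step (iii). Your argument ``the $H$-component visits only a bounded number of lattice points, the trajectory is essentially forced, hence it cannot revisit states and settles in $O(1)$ steps'' does not go through as written, for two reasons. First, the dynamics on the $H$-projection is \emph{not} autonomous: the value of $b_n$ is fixed by $y_n\bmod 1$, and $y_n$ depends on the $\1$-component, which is not bounded independently of $N$; so knowing the $H$-state alone does not determine the next $H$-state. Second, even granting finitely many $H$-states, finiteness gives eventual periodicity, not convergence to the zero $H$-state; you would still need a Lyapunov-type argument to rule out nontrivial cycles, and you have not supplied one. Your ``Main obstacle'' paragraph correctly diagnoses that something is missing here, but the specific mechanism you gesture at (no state revisits, near-determinacy of $b_n$) is not the right one.

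The paper's fix is short and you should use it: work with the integer quantity $\overline{y}_n-\underline{y}_n$ (max minus min of the entries of $Y_n$). If $Y_n$ is not constant then $\underline{y}_n<y_n<\overline{y}_n$, so the newly computed bottom entry $\lceil y_n\rceil$ satisfies $\underline{y}_n<\lceil y_n\rceil\le\overline{y}_n$; after at most $p$ shifts every entry has been replaced and one obtains $\overline{y}_{n+c}-\underline{y}_{n+c}<\overline{y}_n-\underline{y}_n$ for some $c\le p$. Since this spread is a non-negative integer bounded by $\alpha$ once $n\ge n_1$, it reaches $0$ within at most $p\alpha=O(1)$ further steps. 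This monotone-spread argument is exactly Lemma~\ref{lemma:small} in the paper, and it replaces your unfinished combinatorial sketch with a two-line computation.
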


\begin{proof}
Let $y_n$ (respectively $\overline{y}_n$, $\underline{y}_n$) denote the mean (respectively maximal, minimal) of values of $Y_n$. We will prove that $\overline{y}_n-\underline{y}_n$ converges exponentially quickly to 0, which proves the result.

We start with $Y_0=(-N,0,\dots,0,a_0)^T$, thus $\overline{y}_0-\underline{y}_0 = N+a_0 \leq \frac{p+1}{p} N$ since $a_0 \leq \frac{N}{p}$ (recall that $a_0$ is the number of times column 0 has been fired).

This proof is composed of two parts. Firstly, the system converges exponentially quickly on a large scale. Indeed, when $\overline{y}_n-\underline{y}_n$ is large compared to $p$, it is easy to be convinced that shifting the values and padding with the mean value plus a small perturbation decreases exponentially the gap between the values (because the perturbation is negligible). In other words, the value of $\overline{y}_n-\underline{y}_n$ decreases exponentially to a ball of constant radius:

\begin{adjustwidth}{.5cm}{0cm}
\setcounter{lemmalarge}{\value{theoremc}}
\begin{lemma}\label{lemma:large}
  There exists a constant $\alpha$ and a $n_0$ in $\O(\log N)$ s.t. $\overline{y}_{n_0} - \underline{y}_{n_0} < \alpha$.
\end{lemma}
\begin{proof}[Proof sketch]
Complete proof below. Since $Y_n$ converges towards the mean of its values, we consider the evolution of the {\em distance to the mean vector} associated to $Y_n$. We express $Y_n$ in terms of $Y_0$ and $(b_i)_{0 \leq i \leq n}$, and study separately the parts involving respectively $Y_0$ (part 1), and $(b_i)_{0 \leq i \leq n}$ (part 2). We first prove that part 1 converges exponentially quickly to $\0$ since it consists in multiplying $Y_0$ by the $n^{th}$ power of a matrix which eigenvalues are the set of roots of $R(x)$ (see Lemma \ref{lemma:roots}) and 0, thus a contracting matrix. Then we easily upper bound part 2 by a constant $\alpha$ independent of $N$, which completes the proof.
\end{proof}
\end{adjustwidth}

\begin{adjustwidth}{.5cm}{0cm}
\setcounter{lemmaroots}{\value{theoremc}}
\begin{lemma}\label{lemma:roots}
  Let $R(x)=x^{p-1} + \frac{p-1}{p}x^{p-2} + \dots + \frac{2}{p} x+ \frac{1}{p}$.\\ $R(x)$ has $p-1$ distinct roots $\lambda_1,\dots,\lambda_{p-1}$ and for all $i$, $\lambda_i \leq \frac{p-1}{p}$.
\end{lemma}
\begin{proof}[Proof sketch]
Complete proof below. It uses a bound by Enestr\"om and Kakeya (see for example\cite{enestromkakeya}), and the fact that $R(x)$ and $R'(x)$ are coprime.
\end{proof}
\end{adjustwidth}

Secondly, on a small scale, we use the fact that the system converges linearly (values of $Y_n$ are integers).

\begin{adjustwidth}{.5cm}{0cm}
\setcounter{lemmasmall}{\value{theoremc}}
\begin{lemma}\label{lemma:small}
  The value of $\overline{y}_n - \underline{y}_n$ decreases linearly: if $\underline{y}_n \neq \overline{y}_n$, then there is an integer $c$, with $0 \leq c \leq p$ such that $\overline{y}_{n+c} - \underline{y}_{n+c} < \overline{y}_n - \underline{y}_n$.
\end{lemma}
\begin{proof}[Proof sketch]
Complete proof below. The result follows the observation of the strict inequalities $\underline{y}_n < y_n < \overline{y}_n$. $y_n$ is used for the computation of the bottom component of $Y_{n+1}$, hence the idea is that all the values of $Y_{n+p}$ (recall the shifting of values) lies strictly in between the values of $Y_n$.
\end{proof}
\end{adjustwidth}

To conclude, we start with $\overline{y}_0-\underline{y}_0$ in $\O(N)$, we have a constant $\alpha$ and a $n_0$ in $\O(\log N)$ such that $\overline{y}_{n_0}-\underline{y}_{n_0} < \alpha$ thanks to the exponential decrease on a large scale (Lemma \ref{lemma:large}). Then after $p$ iterations the value of $\overline{y}_{n+p}-\underline{y}_{n+p}$ is decreased by at least 1 (Lemma \ref{lemma:small}), hence there exists $\beta$ with $\beta \leq p\alpha$ such that after $\beta$ more iterations we have $\overline{y}_{n_0+\beta}-\underline{y}_{n_0+\beta}=0$. Thus $Y_{n_0+\beta}$ is a constant vector, and $n_0+\beta$ is in $\O(\log N)$.
\end{proof}

In this proof, neither the discrete nor the continuous studies are conclusive by themselves. One one hand, the discrete study gives a linear convergence but not an exponential convergence. On the other hand, the continuous study gives an exponentially convergence towards a constant vector, but in itself the continuous part never reaches the constant vector but tends asymptotically towards it. It is the simultaneous study of those modalities (discrete and continuous) that allows to conclude.

\begin{remark}\label{remark:p2}
  Note that for $p=1$, the averaging system has a trivial dynamic. For $p=2$, the behavior is a bit more complex, but major simplifications are found: the computed value is equal to the mean of 2 values, hence the difference $\underline{y}_n-\overline{y}_n$ decreases by a factor of 2 at each time step. The key simplification arising for $p=2$ is the fact that this decrease arises straightforwardly {\em at each time step}.
\end{remark}

\setcounter{tmp}{\value{theoremc}}
\setcounter{theoremc}{\value{lemmalarge}}
\begin{lemma}
\setcounter{theoremc}{\value{tmp}}
  There exists a constant $\alpha$ and a $n_0$ in $\O(\log N)$ s.t. $\overline{y}_{n_0} - \underline{y}_{n_0} < \alpha$.
\end{lemma}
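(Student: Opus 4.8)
The plan is to exploit that $M\1=\1$, so that the gap $\overline y_n-\underline y_n$ is unaffected by translating $Y_n$ along $\1=(1,\dots,1)^T$, and then to show that after quotienting out this direction the averaging system (\ref{eq:averaging}) becomes a strict linear contraction perturbed by a term bounded \emph{uniformly in $N$}. Concretely, I would set $D_n:=Y_n-y_n\1=\Pi Y_n$, where $\Pi$ is the projection of $\Z^p$ onto $H=\{v:\sum_iv_i=0\}$ along $\1$, i.e. $\Pi v=v-\tfrac1p(\sum_iv_i)\1$. Applying $\Pi$ to (\ref{eq:averaging}) and using $M\1=\1$, $\Pi\1=0$, gives $D_{n+1}=(\Pi M)\,D_n+\tfrac{b_n}{p}\,\Pi K$; since $H$ is stable under $\Pi M$ and contains every $D_n$ and $\Pi K$, unrolling yields
$$D_n=(\Pi M)^nD_0+\sum_{k=0}^{n-1}(\Pi M)^{\,n-1-k}\,\frac{b_k}{p}\,\Pi K,$$
which is precisely the advertised splitting into a part depending on $Y_0$ (part~1) and a part depending on $(b_k)_{0\le k\le n}$ (part~2). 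A point to watch: one must project \emph{before} taking matrix powers — applying $\Pi$ to the unrolled formula for $Y_n$ itself would produce $\Pi M^{\,j}$ rather than $(\Pi M)^j$, and only the latter is a power of a contraction.

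Next I would pin down the spectrum of $\Pi M$ on $H$. Since $e'_0$ is a $1$-eigenvector of $A'$, the characteristic polynomial $(1-x)^2R(x)$ of $A'$ factors as $(1-x)$ (from $e'_0$) times the characteristic polynomial of the complementary block $M=PA'$, so $M$ has characteristic polynomial $(1-x)R(x)$; as $\Pi M\,\1=0$ and $\Pi M$ agrees with $M$ modulo $\langle\1\rangle$, the characteristic polynomial of $\Pi M$ is $x\,R(x)$ and hence that of $(\Pi M)|_H$ is $R(x)$ (note $R(0)=\tfrac1p\neq0$, so the eigenvalue $0$ of $\Pi M$ lives only along $\1\notin H$). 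By Lemma~\ref{lemma:roots} the roots $\lambda_1,\dots,\lambda_{p-1}$ of $R$ are distinct and of modulus at most $\tfrac{p-1}{p}<1$, so $(\Pi M)|_H$ is diagonalisable over $\C$ and there is a constant $C=C(p)$ with $\|((\Pi M)|_H)^m\|\le C\,(\tfrac{p-1}{p})^m$ for every $m\ge0$.

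The remaining estimates are then routine. For part~2, since $0\le b_k/p\le1$ and $\|\Pi K\|\le1$, the perturbation sum has norm at most $C\sum_{j\ge0}(\tfrac{p-1}{p})^j=Cp=:\alpha_0$, a constant independent of $N$ and of $n$. For part~1, $\|D_0\|=\O(N)$ because $Y_0=(-N,0,\dots,0,a_0)^T$ with $0\le a_0\le N/p$, hence $\|(\Pi M)^nD_0\|\le C\|D_0\|(\tfrac{p-1}{p})^n$ falls below $\alpha_0$ as soon as $n\ge n_0$ for some $n_0=\Theta(\log N)$. For $n=n_0$ this gives $\|D_{n_0}\|<2\alpha_0$, and since $\sum_i(D_{n_0})_i=0$ we obtain $\overline y_{n_0}-\underline y_{n_0}=\max_i(D_{n_0})_i-\min_i(D_{n_0})_i\le 2\|D_{n_0}\|<4\alpha_0$, so the lemma holds with $\alpha:=4\alpha_0$.

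As for the main obstacle: once the strict spectral bound $|\lambda_i|\le\tfrac{p-1}{p}$ of Lemma~\ref{lemma:roots} is granted, the argument above is soft, so the genuine analytic work is pushed into that lemma (the Eneström–Kakeya estimate plus coprimality of $R$ and $R'$ for simplicity of the roots). Within the present lemma, the two places needing a little care are the order-of-operations issue noted above and the observation that part~2 stays bounded \emph{uniformly in $N$} precisely because each $b_k/p\le1$ while the operator norms decay geometrically — it is this uniform bound, combined with the $\Theta(\log N)$ decay of part~1, that makes an $\O(\log N)$ number of steps suffice.
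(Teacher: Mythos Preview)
Your proof is correct and follows essentially the same approach as the paper: your projection $\Pi$ is exactly the paper's operator $D$, your unrolled formula and split into parts~1 and~2 match the paper's, and your spectral analysis of $(\Pi M)|_H$ via the factorisation $x\,R(x)$ is equivalent to the paper's Lemma~\ref{lemma:eigenDM}. The only cosmetic difference is that you invoke diagonalisability of $(\Pi M)|_H$ directly to obtain the bound $\|((\Pi M)|_H)^m\|\le C\,(\tfrac{p-1}{p})^m$, whereas the paper phrases this as the map being \emph{eventually contracting}; your formulation is arguably cleaner when bounding the perturbation sum.
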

\begin{proof}
  We start with $Y_0=(-N,0,\dots,0,a_0)^T$, thus $\overline{y}_0-\underline{y}_0=N+a_0 \leq \frac{p+1}{p}N$.
  
  The relation linking $Y_n$ to $Y_{n+1}$ is
  $$Y_{n+1} = M \, Y_n + \frac{b_n}{p} \, K$$
  
  Since we want to prove that $Y_n$ converges to a constant vector close to the mean of its values, we want to consider the evolution of the distance to the mean vector associated to $Y_n$ via the following application $D$:
  $$Y_n=\begin{pmatrix}y_{n_0}\\\vdots\\y_{n_{p-1}}\end{pmatrix} \overset{D}{\longrightarrow} \begin{pmatrix}y_{n_0}-y_n\\\vdots\\y_{n_{p-1}}-y_n\end{pmatrix} \,\,\,\,\,\,\,\,\,\text{ with }\,\,\,\,\,\,\,\,\, y_n=\frac{1}{p}\sum \limits_i y_{n_i}$$
  
  The aim is thus to prove that there exists an $n_0$ in $\O(\log N)$ such that the norm of $D \, Y_{n_0}$ is bounded by a constant.
  
  We can notice that $Y_n-D \, Y_n=(y_n,\dots,y_n)^T$, therefore $M \, (Y_n - D \, Y_n) = (Y_n - D \, Y_n)$, and we have $(D \, M \, Y_n) - (D \, M \, D \, Y_n) = \0$. Consequently, we can consider the relation
  $$D \, Y_{n+1} = D\, M \, D \, Y_n + \frac{b_n}{p} \, D\, K$$
and express $Y_n$ in terms of $Y_0$ and $(b_i)_{0 \leq i \leq n}$:
  $$D \, Y_{n} = (D\,M)^n D \, Y_0 + \frac{1}{p} \sum \limits_{i=0}^{n-1} b_i (D \, M)^{n-1-i} D \, K$$
   
  In order to prove the result, we will see that the linear map $Z \mapsto (D \, M) \, Z$ is {\em eventually contracting}, hence it converges exponentially quickly to $\0$, its unique fixed point (\cite{hasselblatt} Corollary 2.6.13). That is, $(D \, M)^n \, D \, Y_0$ converges to $\0$ exponentially quickly. It then remains to upper bound the norm of the remaining sum by $\alpha$ to get the result.
  
  To prove that the map $Z \mapsto (D \, M) \, Z$ is eventually contracting, it is enough to prove that its {\em spectral radius}\footnote{the maximal absolute value of an eigenvalue of $D \, M$.} is smaller than 1 (\cite{hasselblatt} Corollary 3.3.5). This part is detailed in Lemma \ref{lemma:eigenDM} below, using the fact that $M$ is a companion matrix which eigenvalues are simply upper bounded with a result by Enestr\"om and Kakeya \cite{enestromkakeya}.

  Since $\overline{y}_0-\underline{y}_0$ is in $\O(N)$, $\| D \, Y_0 \|_{\infty}$ is also in $\O(N)$ and there exists an $n_0$ in $\O(\log N)$ such that $\| (D \, M)^{n_0} \, D \, Y_0 \|_{\infty} < 1$.
  
  It remains to upper bound the summation by a constant (we recall that for a matrix $A$, $\| A \|_\infty = \sup \| A \, x\|_\infty$ for $\|x\|_\infty=1$):
  $$\begin{array}{rcl}
  \left\| \frac{1}{p} \sum \limits_{i=0}^{n_0-1} b_i (D \, M)^{n_0-1-i} D \, K \right\|_{\infty} &\leq& \frac{1}{p}\sum \limits_{i=0}^{n_0-1} p\| D \, M \|_\infty^{n_0-1-i} \| D \, K\|_\infty\\
  &\leq& \frac{1}{1-\|D\,M\|_\infty} \| D\,K\|_\infty\\
  &\leq& \beta-1
  \end{array}$$
  for some constant $\beta$ independent of $N$. Finally, we have
  $$\| D \, Y_{n_0}\|_\infty \leq \| (D \, M)^{n_0} D \, Y_0 \|_\infty + \| \frac{1}{p} \sum \limits_{i=0}^{n-1} b_i (D \, M)^{n-1-i} D \, K \|_\infty \leq \beta$$
  and the fact that $\overline{y}_{n_0} - \underline{y}_{n_0} \leq 2 \| D \, Y_{n_0} \|_\infty$ completes the proof with $\alpha=2\beta$.
\end{proof}

\setcounter{tmp}{\value{theoremc}}
\setcounter{theoremc}{\value{lemmaroots}}
\begin{lemma}
\setcounter{theoremc}{\value{tmp}}
  Let $R(x)=x^{p-1} + \frac{p-1}{p}x^{p-2} + \dots + \frac{2}{p} x+ \frac{1}{p}$.\\ $R(x)$ has $p-1$ distinct roots $\lambda_1,\dots,\lambda_{p-1}$ and for all $i$, $\lambda_i \leq \frac{p-1}{p}$.
\end{lemma}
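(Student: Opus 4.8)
The plan is to treat the two claims independently: the bound $|\lambda_i|\le\frac{p-1}{p}$ comes from an Eneström--Kakeya estimate, while the simplicity of the roots follows from a short $\gcd$ computation after recognizing a convenient factorization.

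First I would normalize: since $p\,R(x)=\sum_{k=0}^{p-1}(k+1)\,x^k$, the polynomial $R$ has degree $p-1$ and its coefficients $c_k=\frac{k+1}{p}$ (for $0\le k\le p-1$) are positive and strictly increasing. I would then apply the scaled form of the Eneström--Kakeya theorem: for a polynomial $\sum_{k=0}^{m}c_kx^k$ with all $c_k>0$, every root satisfies $|x|\le\max_{1\le k\le m}\frac{c_{k-1}}{c_k}$. This is just the classical statement ``nondecreasing positive coefficients $\Rightarrow$ all roots in the closed unit disk'' applied to $R(\beta x)$, where $\beta=\max_{1\le k\le m}\frac{c_{k-1}}{c_k}$ is exactly the smallest scaling for which the coefficients $c_k\beta^k$ become nondecreasing. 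Here $\frac{c_{k-1}}{c_k}=\frac{k}{k+1}$ is increasing in $k$, so the maximum over $1\le k\le p-1$ equals $\frac{p-1}{p}$, giving $|\lambda_i|\le\frac{p-1}{p}<1$ for every root $\lambda_i$ of $R$.

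For distinctness, I would exploit the identity coming from $x+x^2+\dots+x^p=\frac{x^{p+1}-x}{x-1}$. Differentiating yields $p\,R(x)=\sum_{k=1}^{p}k\,x^{k-1}=\frac{d}{dx}\!\left(\frac{x^{p+1}-x}{x-1}\right)=\frac{p\,x^{p+1}-(p+1)x^{p}+1}{(x-1)^2}$, hence $Q(x):=p\,x^{p+1}-(p+1)x^{p}+1=p\,(x-1)^2R(x)$ (this is, up to the factor $p$, the characteristic polynomial $(1-x)^2R(x)$ of $A$ encountered earlier). Suppose now $\lambda$ were a common root of $R$ and $R'$. Since $R(0)=\frac1p\neq0$ and $R(1)=\frac{1}{p}\cdot\frac{p(p+1)}{2}=\frac{p+1}{2}\neq0$, we have $\lambda\notin\{0,1\}$; as $\lambda$ is then a root of $R$ of multiplicity at least $2$ and $(x-1)^2$ does not vanish at $\lambda$, it is a root of $Q$ of multiplicity at least $2$, so $Q'(\lambda)=0$. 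But $Q'(x)=p(p+1)\,x^{p-1}(x-1)$ vanishes only at $0$ and $1$, a contradiction. Hence $\gcd(R,R')=1$, so the $p-1$ roots $\lambda_1,\dots,\lambda_{p-1}$ of $R$ are pairwise distinct.

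The main obstacle here is bookkeeping rather than depth: one must be careful to invoke the rescaled version of Eneström--Kakeya (the naive version only yields the useless bound $|\lambda_i|\le1$), and one must spot the factorization $Q=p(x-1)^2R$, which collapses the squarefreeness of $R$ to the obvious observation that neither $0$ nor $1$ is a root of $R$.
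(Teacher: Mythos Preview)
Your proof is correct. The Eneström--Kakeya half matches the paper's approach (the paper simply cites the bound; you spell out that the ratio form of the theorem gives $\max_k c_{k-1}/c_k=(p-1)/p$).

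For the distinctness half you take a genuinely different route. The paper passes to the reciprocal polynomial $S(x)=p\,x^{p-1}R(1/x)=x^{p-1}+2x^{p-2}+\cdots+(p-1)x+p$ and exhibits explicit B\'ezout coefficients $a=\frac{-p+1}{p(p+1)}x+\frac1p$, $b=\frac{1}{p(p+1)}x^2-\frac{1}{p(p+1)}x$ with $aS+bS'=1$, whence $\gcd(S,S')=1$ and so $\gcd(R,R')=1$. You instead recognise that $Q(x)=p\,x^{p+1}-(p+1)x^{p}+1=p(x-1)^2R(x)$ is (up to a constant) the characteristic polynomial of $A$, and observe that $Q'(x)=p(p+1)x^{p-1}(x-1)$ vanishes only at $0$ and $1$, which are not roots of $R$; hence $R$ can have no repeated root. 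Your argument is slightly more conceptual---it exploits the very factorisation already present in the paper's analysis of $A$ and avoids guessing B\'ezout coefficients---while the paper's argument is purely mechanical once the coefficients are found. Both are short and either would serve.
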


\begin{proof}
  The distinctness of the roots of $S(x)=px^{p-1}R(\frac{1}{x})=x^{p-1}+2x^{p-2}+\dots+(p-1)x+p$ implies the distinctness of the roots of $R(x)$. The distinctness of the roots of $S(x)$ comes from the fact that $S(x)$ and $S'(x)$ are co-prime. With $a=\frac{-p+1}{p(p+1)}x+\frac{1}{p}$ and $b=\frac{1}{p(p+1)}x^2-\frac{1}{p(p+1)}x$, we get $aS(x)+bS'(x)=1$. Therefore from Bezout $GCD(R(x),R'(x))=1$, which implies the result.
  
  For the second part of the lemma, a classical result due to Enestr\"om and Kakeya (see for example \cite{enestromkakeya}) concerning the bounds of the moduli of the zeros of polynomials having positive real coefficients states that all the complex roots of $R(x)$ have a moduli smaller or equal to $\frac{p-1}{p}$.
\end{proof}

\setcounter{tmp}{\value{theoremc}}
\setcounter{theoremc}{\value{lemmasmall}}
\begin{lemma}
\setcounter{theoremc}{\value{tmp}}
The value of $\overline{y}_n - \underline{y}_n$ decreases linearly: if $\underline{y}_n \neq \overline{y}_n$, then there is an integer $c$, with $0 \leq c \leq p$ such that $\overline{y}_{n+c} - \underline{y}_{n+c} < \overline{y}_n - \underline{y}_n$.
\end{lemma}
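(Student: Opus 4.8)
The plan is to exploit the extremely simple form of the averaging system: the entries of $Y_{j+1}$ are the last $p-1$ entries of $Y_j$, shifted one row up, together with a new bottom entry $z_j := y_j + \frac{b_j}{p}$, namely the mean of $Y_j$ perturbed by a nonnegative amount at most $1$. The first thing I would establish is a \emph{stays-in-range} claim: whenever $\underline{y}_j < \overline{y}_j$, the new entry satisfies $z_j \in (\underline{y}_j, \overline{y}_j]$. The (strict) lower bound is immediate, since $y_j$ is a convex combination of the not-all-equal entries of $Y_j$, so $\underline{y}_j < y_j \leq z_j$. The upper bound $z_j \leq \overline{y}_j$ is the place where the near-determinacy of $b_j$ is needed: $z_j$ is an integer, and it equals $\lceil y_j \rceil \leq \overline{y}_j$ when $y_j \notin \Z$ (because then $y_j < \overline{y}_j \in \Z$), whereas if $y_j \in \Z$ then $y_j \leq \overline{y}_j - 1$, so $z_j \leq y_j + 1 \leq \overline{y}_j$. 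I expect this to be the only delicate point, and the main obstacle: the crude estimate $\frac{b_j}{p} \leq 1$ gives only $z_j < \overline{y}_j + 1$, so without integrality the maximum could conceivably drift upward, and it is exactly the discreteness of the height differences that forbids this.

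The stays-in-range claim immediately implies that, as long as $\underline{y}_j < \overline{y}_j$, all entries of $Y_{j+1}$ lie in $[\underline{y}_j, \overline{y}_j]$, hence $\overline{y}$ is non-increasing and $\underline{y}$ non-decreasing, i.e. the gap $\overline{y}_j - \underline{y}_j$ never grows while it is positive. I would then prove the lemma by contradiction: assume $\underline{y}_n < \overline{y}_n$ and that $\overline{y}_{n+c} - \underline{y}_{n+c} \geq \overline{y}_n - \underline{y}_n$ for every $c \in \{1, \dots, p\}$. Then the gap is positive at each of $n, n+1, \dots, n+p$, so the monotonicity just proved applies at every step of this window; chaining it together with the standing assumption forces $\overline{y}_{n+c} = \overline{y}_n =: M$ and $\underline{y}_{n+c} = \underline{y}_n =: m$, with $M > m$, for all $c \in \{0, \dots, p\}$.

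The last step is a bookkeeping of the shift: an easy induction on $k$ gives that $Y_{n+k}$ consists of the last $p-k$ entries of $Y_n$ followed by $z_n, z_{n+1}, \dots, z_{n+k-1}$, so after $p$ steps $Y_{n+p} = (z_n, z_{n+1}, \dots, z_{n+p-1})$ is made up entirely of freshly appended values. Since $\underline{y}_{n+p} = m$, some $z_{n+k}$ with $0 \leq k \leq p-1$ equals $m$; but $\underline{y}_{n+k} = m < M = \overline{y}_{n+k}$, so $Y_{n+k}$ is not constant and the stays-in-range claim yields the strict inequality $z_{n+k} > \underline{y}_{n+k} = m$, a contradiction. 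Therefore some $c \in \{1, \dots, p\}$ satisfies $\overline{y}_{n+c} - \underline{y}_{n+c} < \overline{y}_n - \underline{y}_n$, which is the statement (the bound $c \leq p$ and the exclusion of $c = 0$ both being automatic). When filling in the details the only point to watch is that the monotonicity chaining uses $\underline{y}_j < \overline{y}_j$ only for $j = n, \dots, n+p-1$, which the contradiction hypothesis indeed guarantees.
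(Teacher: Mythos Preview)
Your proposal is correct and follows essentially the same route as the paper: establish that each newly appended entry $z_j = y_j + \tfrac{b_j}{p}$ lies in $(\underline{y}_j,\overline{y}_j]$ whenever $Y_j$ is nonconstant, deduce monotonicity of $\overline{y}$ and $\underline{y}$, and then use that after $p$ shifts the vector $Y_{n+p}$ consists entirely of such new entries, forcing $\underline{y}_{n+p} > \underline{y}_n$. The only stylistic difference is that the paper argues directly (either the vector becomes constant before step $p$, or $\underline{y}_n < \underline{y}_{n+p} \leq \overline{y}_{n+p} \leq \overline{y}_n$), while you phrase the same endgame as a contradiction; your justification of the upper bound $z_j \leq \overline{y}_j$ via the integrality case split is in fact more explicit than the paper's, which asserts that inequality without spelling out why $\tfrac{b_j}{p}\leq 1$ alone is insufficient.
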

\begin{proof}
  Let $Y_n=(Y_{n_0},\dots,Y_{n_{p-1}})$. If $\underline{y}_n \neq \overline{y}_n$, that is, if the vector is not constant, the mean value is strictly between the greatest and smallest values: $\underline{y}_n < y_n < \overline{y}_n$. Consequently $\underline{y}_n < (Y_{n+1_{p-1}} = y_n + \frac{b_n}{p} \leq \overline{y}_n$. This reasoning applies while $\underline{y}_{n+i} \neq \overline{y}_{n+i}$. If we consider $p$ iterations, we either have $\underline{y}_{n+c}=\underline{y}_{n+c}$ for some $c \leq p$; or, we use the facts that $\underline{y}_n \leq \underline{y}_{n+1}$ and $\overline{y}_n \geq \overline{y}_{n+1}$, therefore we have $\underline{y}_n < Y_{n+p_i} \leq \overline{y}_n$ for all $i$ (because shifting the values leads to $Y_{n+p_i} = Y_{n+1+i_{p-1}}$ for all $0 \leq i \leq p-1$, and $\underline{y}_n \leq \underline{y}_{n+1+i} \leq Y_{n+1+i_{p-1}} \leq \overline{y}_{n+1+i} \leq \overline{y}_n$), thus $\underline{y}_n < \underline{y}_{n+p}$ and $\overline{y}_{n+p} \leq \overline{y}_n$.
\end{proof}

\begin{lemma}\label{lemma:eigenDM}
  The spectral radius of DM is strictly smaller than 1.
\end{lemma}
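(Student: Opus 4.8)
The plan is to compute the characteristic polynomial of $DM$ in closed form and read its roots off, exploiting the two structural facts that make the situation rigid. First, the all-ones vector $\mathbf 1:=(1,\dots,1)^T$ is an eigenvector of $M$ for the eigenvalue $1$: each of the first $p-1$ rows of $M$ is a unit shift and the last row is $(\frac1p,\dots,\frac1p)$, so $M\mathbf 1=\mathbf 1$. Second, the characteristic polynomial of $M$ is $(x-1)R(x)$: indeed the characteristic polynomial of $A$ is $(1-x)^2R(x)$ and $M$ is obtained from $A'$ by deleting the component $e'_0$, which is an eigenvector for the eigenvalue $1$; equivalently, one reads this straight off the companion form of $M$, whose last row $(\frac1p,\dots,\frac1p)$ gives $\det(xI-M)=x^p-\frac1p(1+x+\dots+x^{p-1})=(x-1)R(x)$. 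Finally, the operator $D$ of the previous lemma is $D=I_p-\frac1p\,\mathbf 1\mathbf 1^T$, so $D\mathbf 1=0$ and $\mathbf 1^T\mathbf 1=p$.

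Hence $DM=M-\frac1p\,\mathbf 1(\mathbf 1^TM)$ is a rank-one perturbation of $M$, and I would apply the matrix determinant lemma: for $x\neq1$,
$$\det(xI-DM)=\det(xI-M)\Big(1+\tfrac1p\,(\mathbf 1^TM)\,(xI-M)^{-1}\mathbf 1\Big).$$
From $M\mathbf 1=\mathbf 1$ one gets $(xI-M)^{-1}\mathbf 1=\tfrac1{x-1}\mathbf 1$, hence $M(xI-M)^{-1}\mathbf 1=\tfrac1{x-1}\mathbf 1$ and $\mathbf 1^TM(xI-M)^{-1}\mathbf 1=\tfrac{p}{x-1}$, so the scalar factor collapses to $1+\tfrac1{x-1}=\tfrac{x}{x-1}$ and
$$\det(xI-DM)=(x-1)R(x)\cdot\frac{x}{x-1}=x\,R(x).$$
Since $R(1)=\frac1p(1+2+\dots+p)=\frac{p+1}{2}\neq0$, the value $x=1$ is not a root, so this identity extends to all $x$ by continuity and the characteristic polynomial of $DM$ is exactly $x\,R(x)$. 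Consequently the spectrum of $DM$ is $\{0,\lambda_1,\dots,\lambda_{p-1}\}$, where the $\lambda_i$ are the roots of $R$, and by Lemma~\ref{lemma:roots} each $|\lambda_i|\leq\frac{p-1}{p}$; therefore the spectral radius of $DM$ equals $\max_i|\lambda_i|\leq\frac{p-1}{p}<1$, which is the claim.

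The conceptual content is that $D$ destroys only the mean direction $\mathbf 1$, which is precisely the eigenvector of $M$ attached to the non-contracting eigenvalue $1$; replacing that eigenvalue by $0$ leaves only the contracting modes $\lambda_i$. No genuinely hard step is involved once the rank-one structure of $DM$ and the shared eigenvector $\mathbf 1$ are noticed; the only points demanding care are the sign and normalisation bookkeeping in the determinant identity and the identification $\det(xI-M)=(x-1)R(x)$. An alternative route, avoiding the determinant computation, is to note that $DM$ and $MD=M-\frac1p\,\mathbf 1\mathbf 1^T$ have the same nonzero eigenvalues and to diagonalise the rank-one update $MD$ directly, again via $M\mathbf 1=\mathbf 1$; in either case Lemma~\ref{lemma:roots} supplies all the quantitative content.
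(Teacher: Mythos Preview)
Your argument is correct and reaches the same conclusion as the paper---the spectrum of $DM$ is $\{0,\lambda_1,\dots,\lambda_{p-1}\}$, whence spectral radius at most $\frac{p-1}{p}$---but the route is genuinely different. The paper proceeds by explicit eigenvector construction: taking eigenvectors $v_0,\dots,v_{p-1}$ of $M$ associated to $1,\lambda_1,\dots,\lambda_{p-1}$, it checks that $DM\,v_0=\mathbf 0$ (since $v_0$ is constant and $D$ kills constants) and that $DM(v_i-c_i)=\lambda_i(v_i-c_i)$ for $i\geq1$, where $c_i$ is the mean vector of $v_i$. You instead recognise $DM=M-\frac1p\mathbf 1(\mathbf 1^TM)$ as a rank-one perturbation of $M$ and apply the matrix determinant lemma together with $M\mathbf 1=\mathbf 1$ to obtain the characteristic polynomial $x\,R(x)$ in closed form.

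What each approach buys: the paper's argument is entirely elementary (no determinant identities) but, to conclude that the exhibited eigenvalues exhaust the spectrum of the $p\times p$ matrix $DM$, it tacitly uses the distinctness of the $\lambda_i$ from Lemma~\ref{lemma:roots}. Your computation yields the full characteristic polynomial at once, so multiplicities are handled automatically and distinctness is not needed for the spectral conclusion (only the modulus bound from Lemma~\ref{lemma:roots} is used). Your closing remark on the conceptual content---$D$ kills precisely the non-contracting eigendirection $\mathbf 1$---is exactly the mechanism the paper's proof makes explicit.
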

\begin{proof}
$M$ is a {\em compagnon matrix}, its characteristic polynomial is
$$x^p-\sum \limits_{k=0}^{p-1} \frac{1}{p} x^k = (x-1) R(x)$$
with $R(x)=x^{p-1} + \frac{p-1}{p}x^{p-2} + \dots + \frac{2}{p} x+ \frac{1}{p}$. From Lemma \ref{lemma:roots} we know that $R(x)$ has $p-1$ distinct roots $\lambda_1,\dots,\lambda_{p-1}$, all comprised between $\frac{1}{p}$ and $\frac{p-1}{p}$. The set of eigenvalues of $M$ is thus $M_\lambda=\{1,\lambda_1,\dots,\lambda_{p-1}\}$. We will prove that the set of eigenvalues of the matrix $DM$ is $DM_\lambda = \{0,\lambda_1,\dots,\lambda_{p-1}\}$.

Let $v_0,\dots,v_{p-1}$ be non null eigenvectors respectively associated to the eigenvalues $1,\lambda_1,\dots,\lambda_{p-1}$ of $M$. The case $v_1$ is particular and allows to conclude that $0$ is an eigenvalue of $DM$. The other eigenvectors of $M$ lead to the conclusion that $DM$ also admits the eigenvalues $\lambda_1,\dots,\lambda_{p-1}$.

\begin{itemize}
  \item $DM \, v_0 = D \,v_0$ since the associated eigenvalue is $1$, and $D \, v_0 = \0$ because the eigenspace associated to the eigenvalue 1 is the hyper plan of constant vectors
  . As a consequence, 0 is an eigenvalue of $DM$.
  \item For the other eigenvectors, that is, for $1 \leq i \leq p-1$, let $c_i$ be the constant vector with all its components equal to $\frac{1}{p}\sum_{k=0}^{p-1} v_{i_k}$, with $v_{i_k}$ the $k^{th}$ component of the vector $v_i$. $v_i - c_i \neq \0$, and
  $$\begin{array}{rcl}
  DM \, (v_i - c_i) & = & D \, (M \, v_i - M \, c_i)\\
  & = & D \, (\lambda_i \, v_i - c_i)\\
  & = & \lambda_i \, D \, v_i - D \, c_i\\
  & = & \lambda_i (v_i - c_i) - \0
  \end{array}$$
  where the last equality is obtained from the fact that by definition of $D$ we have $D \, v_i = v_i - c_i$. As a consequence, $\lambda_i$ is an eigenvalue of $DM$.
  \end{itemize}
  
  Finally, $DM_\lambda = \{ 0,\lambda_1,\dots,\lambda_p-1 \}$ and the spectral radius of $DM$ is smaller or equal to $\frac{p-1}{p}$.
\end{proof}


\subsection{\dots and allows to conclude}\label{ss:wave}

Lemma \ref{lemma:average} shows that there exists an $n$ in $\O(\log N)$ such that $Y_n$ is a constant vector. In this subsection, we prove that if $Y_n$ is a constant vector, then the shape of the sand pile configuration is wavy from the index $n$, exactly as it is stated in the main Theorem and on figure \ref{fig:wave1}.

\begin{lemma}\label{lemma:yn2wave}
  $Y_n$ is a constant vector of $\Z^p$ implies
  $$\pi(N)_{[n,\infty[} \in \left(\left(\bigplus \limits_{i=0}^{p+1} 0^i\right) \cdot p \cdot p\!-\!1 \cdot \ldots \cdot 1\right)^* 0^\omega$$
\end{lemma}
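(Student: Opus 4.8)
The goal is to translate the statement "$Y_n$ is a constant vector" into information about the shot vector $(a_i)_i$, and from there into the sequence of height differences $\pi(N)_i$ via the formula $b_i = a_{i-p} - (p+1)a_i + p\,a_{i+1}$. Recall that $Y_m = (a_{m-p+1}-a_{m-p},\ \dots,\ a_m - a_{m-1})^T$, so saying $Y_n = (d,d,\dots,d)^T$ for a single integer $d$ means that the consecutive differences of the shot vector are all equal to $d$ in the window of indices around $n$. The first step is to iterate the averaging system from this constant vector and track how the constant can evolve: from $Y_m=(d,\dots,d)$ the mean is $y_m=d$, which is an integer, so by the remark $b_m$ is forced to be $0$ or $p$, i.e. $\frac{b_m}{p}\in\{0,1\}$, and $Y_{m+1}=(d,\dots,d,d+\tfrac{b_m}{p})$. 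So from a constant vector the only way to leave "all equal" is to append $d+1$ (when $b_m=p$) and otherwise we stay constant with value $d$. I would analyze the forward orbit as a walk on the value $d$: once a $d{+}1$ is appended, the mean strictly exceeds $d$ but is less than $d{+}1$ (it is not an integer as long as the vector is non-constant), so $b$ becomes \emph{determined} and the vector is pulled back; using Lemma~\ref{lemma:small} it returns to a constant vector after at most $p$ steps, and that new constant is either $d$ or $d+1$. Crucially the difference sequence of $(a_i)$ is therefore \emph{eventually constant and non-decreasing in value}, taking some value $d$ and then possibly once jumping to $d+1$ (or in fact, since the shot vector is ultimately null, $d$ must be $-1$ eventually and then $0$).

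**From shot vector to height differences.** Once I know that for all $i \geq n-p$ the differences $a_{i+1}-a_i$ equal $-1$, except possibly in a bounded transition zone, I plug into $b_i = a_{i-p} - (p+1)a_i + p\,a_{i+1} = (a_{i-p}-a_i) - p(a_i - a_{i+1})$. If the differences around index $i$ are all $-1$, then $a_{i-p}-a_i = p$ and $a_i - a_{i+1} = 1$, giving $b_i = p - p\cdot\text{(something)}$ — wait, more carefully: write $a_i - a_{i+1} = e$ where $e\in\{0,1,\dots\}$ is minus the difference; in the constant regime $e=1$, so $a_{i-p}-a_i = \sum_{j=1}^{p}(a_{i-j}-a_{i-j+1})$ which is the sum of $p$ consecutive such $e$-values. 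In the "wave" regime these $e$-values cycle through a fixed pattern and $b_i$ reads off as $p, p-1, \dots, 2, 1$ repeating. The clean way to see this: the constancy of $Y_n$ says $a$ is an arithmetic-like progression locally, and a genuinely linear shot vector $a_i = c - i$ (slope $-1$) gives $b_i = p - (p+1)\cdot 0 + \dots$; I should instead directly observe that when all the local differences are $-1$ the configuration $b$ is forced, and match it against the regular expression. The "extra zeros" $\bigplus_{i=0}^{p+1}0^i$ in the regular expression come precisely from the bounded transition zone: near where the difference sequence switches from its disordered prefix to the constant value $-1$, the height differences can spell out a short block of $0$'s (of length between $0$ and $p+1$) before each wave, because the shot vector there is "flat" ($a_i - a_{i+1} = 0$) over a short run.

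**Carrying it out.** Concretely I would: (1) show from the averaging dynamics that if $Y_n$ is constant then for every $m\geq n$ the vector $Y_m$ is "eventually constant with at most one upward jump", and since $(a_i)$ is ultimately null, the stabilized difference value is $-1$ and then $0$; (2) deduce that the shot vector, from index roughly $n-p$ onward, is piecewise of the form $a_i = a_{i_0} - (i - i_0)$ on a final segment, preceded by a bounded region where $a_i - a_{i+1} \in \{0,1\}$; (3) substitute into $b_i = a_{i-p} - (p+1)a_i + p a_{i+1}$ and check that on the final linear segment the $b_i$ cycle through $p, p-1, \dots, 1$ (a "wave"), that a run of $k\leq p+1$ consecutive $0$-differences in $a$ produces a block $0^{k}$ in $b$ followed by a wave, and that the $0^\omega$ tail is produced once $a$ vanishes; this matches $\left(\left(\bigplus_{i=0}^{p+1}0^i\right)\cdot p\cdot(p{-}1)\cdots 1\right)^*0^\omega$.

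**Main obstacle.** The delicate point is step (1): handling the interplay between the rounding (the forced values of $b_m$) and the claim that the constant value of $Y$ can only go \emph{up} and only \emph{once}. One must rule out oscillation — a $d{+}1$ appearing, the vector returning to constant value $d$, then later another $d{+}1$, etc. This needs a monotonicity argument: the partial sums $a_m$ are determined by $Y_0$ and $(b_i)_{i<m}$, and the shot vector must be non-negative and ultimately $0$, which forces the difference sequence to be eventually exactly $-1$ then $0$ with only a bounded number of "flat" steps in between; bounding the number of flat ($b=p$ forced as $0$, vs. determined) steps by a constant (at most $p+1$, which is where the regular expression exponent comes from) is the real content. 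The second subtlety, minor by comparison, is the exact bookkeeping at the boundary between the "messy" prefix and the constant regime to confirm the length-$(p+1)$ bound on the $0$-blocks rather than some looser constant.
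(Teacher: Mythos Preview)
Your overall strategy --- run the averaging system forward from a constant vector and read off the $b_i$ --- is the paper's approach, but the execution has real gaps.

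\medskip
\textbf{The key computation you are missing.} When $Y_n=(d,\dots,d)^T$ and $b_n=p$, so that $Y_{n+1}=(d,\dots,d,d+1)^T$, you should simply compute the next steps of the averaging system directly. The mean of $Y_{n+1}$ is $d+\tfrac{1}{p}$, hence the last component of $Y_{n+2}$ is $d+\tfrac{1}{p}+\tfrac{b_{n+1}}{p}$; integrality forces $b_{n+1}=p-1$ and $Y_{n+2}=(d,\dots,d,d+1,d+1)^T$. Iterating, $b_{n+i}=p-i$ for $0\le i<p$ and $Y_{n+p}=(d+1,\dots,d+1)^T$. So once $b_n=p$, the next $p-1$ values of $b$ are \emph{completely determined} to be $p-1,p-2,\dots,1$, and the new constant is exactly $d+1$, not ``either $d$ or $d+1$''. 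There is no oscillation to rule out, and invoking Lemma~\ref{lemma:small} is unnecessary and too weak --- it only says the spread decreases, not what the values are. Your ``main obstacle'' is therefore not an obstacle at all.

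\medskip
\textbf{Misconceptions this causes.} Because you did not do this computation, several claims in your sketch are off. The constant value of $Y$ does not go up ``only once'': it increments by $1$ with every wave, and there are many waves (this is the Kleene star in the statement). Correspondingly, a constant stretch of $Y$ (where $b=0$) does not produce a wave in $b$; it produces zeros. The wave $p,p-1,\dots,1$ is produced precisely by a single $d\to d+1$ transition. Your attempt to recover the wave by plugging a linear shot vector into $b_i=a_{i-p}-(p+1)a_i+pa_{i+1}$ was bound to fail: if $a$ is exactly linear on the relevant window you get $b_i=0$, not a wave.

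\medskip
\textbf{The missing ingredient for the $0$-block bound.} You correctly flag that bounding the number of consecutive zeros by $p+1$ is nontrivial, but you do not identify how to do it. This is where the plateau lemma (Lemma~\ref{lemma:plateau}) enters: a run of $b_i=0$ of length $\ell$ is a plateau of length $\ell+1$ in the height representation, and no reachable configuration has a plateau longer than $p+1$; hence at most $p+1$ consecutive zeros before the next wave (and the only way to loop forever on $b=0$ is the final $0^\omega$ tail). Without this lemma your argument has no control on the $0$-block lengths.

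\medskip
In short: carry out the forced-computation above to get the automaton on constant vectors (loop on $b=0$, or emit $p,p-1,\dots,1$ and increment the constant), and cite Lemma~\ref{lemma:plateau} for the $p+1$ bound on consecutive zeros. The detour through shot-vector formulas and the worry about oscillation can be dropped.
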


\begin{proof}
  From this index $n$ for which $Y_n$ is a constant vector, we will see that the sequence $(b_i)_{i \geq n}$ is determined, or more accurately nearly determined. The proof works as follows. We will first see that if $Y_n$ is a constant vector, then the value of $b_n$ is 0 or $p$. if it is $0$ then $Y_{n+1}$ is again a constant vector; if it is $p$, then the sequence $(b_i)_{n \leq i < n+p}$ is determined to be equal to $(p,p-1,\dots,1)$. Lemma \ref{lemma:plateau} states that there is no sequence of more than $p+1$ values 0 in the sequence $(b_i)_{i \in \N}$, hence we may have a sequence of at most $p+1$ values 0 (during which $Y_i$ remains constant), eventually followed by a value $p$ which triggers the pattern $(p,p-1,\dots,1)$. Finally, at the end of the sequence $(p,p-1,\dots,1)$, we are back to a constant vector $Y_i$, and the reasoning can be repeated until the end of the configuration (the $0^\omega$), hence the result.
   
  We concentrate on the sequence of values of $Y_i$. The fact that its components are integers, and especially the last one, will play a crucial role in the determination of the value of $b_i$ because $0 \leq b_i \leq p$ (let us recall that the sequence $b_i$ is the height difference representation of the fixed point with $N$ grains, {\em i.e.}, $b_i=\pi(N)_i$).

  We start from the hypothesis that $Y_n=(\alpha,\dots,\alpha)^T$, thus from the averaging system's equation (\ref{eq:averaging}) we have $Y_{n+1}=(\alpha,\dots,\alpha,\alpha+\frac{b_n}{p})^T$. $Y_{n+1}$ is an integer vector and $\alpha$ is an integer, hence $b_n$ equals 0 or $p$.
  \begin{itemize}
    \item If $b_n=0$ then $Y_{n+1}=(\alpha,\dots,\alpha)^T$ and we are back to the same situation, the dilemma goes on: the value of $b_{n+1}$ is not determined, it can be $0$ or $p$.
    \item If $b_n=p$ then $Y_{n+k+1}=(\alpha,\dots,\alpha,\alpha+1)^T$ from the relation above. A regular pattern then emerges:
    \begin{itemize} 
      \item if $Y_{n+1}=(\alpha,\dots,\alpha,\alpha+1)^T$, then $Y_{n+2}=(\alpha,\dots,\alpha,\alpha+1,\frac{p\alpha+1+b_{n+1}}{p})^T$ and it determines $b_{n+1}=p-1$ so that $Y_{n+2}=(\alpha,\dots,\alpha,\alpha+1,\alpha+1)$ is an integer vector;
      \item if $Y_{n+2}=(\alpha,\dots,\alpha,\alpha+1,\alpha+1)^T$, then $Y_{n+2}=(\alpha,\dots,\alpha,\alpha+1,\frac{p\alpha+2+b_{n+2}}{p})^T$ and it determines $b_{n+2}=p-2$ so that $Y_{n+3}=(\alpha,\dots,\alpha,\alpha+1,\alpha+1,\alpha+1)^T$ is an integer vector;
      \item {\em et cetera} we have $b_{n+i}=p-i$ for $0 \leq i < p$, and eventually $Y_{n+p}=(\alpha+1,\dots,\alpha+1)^T$ is a constant vector (note that $Y_0$ has a negative mean, hence $\alpha$ is negative, which is consistent with the $\alpha+1$ we obtain).
    \end{itemize}
  \end{itemize}
  
  Let us illustrate the developments above on the following picture, where arrows are labeled by values of $b_i$.
  \begin{center}
    \includegraphics{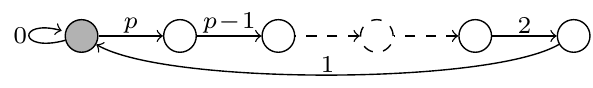}
  \end{center}
   When $Y_n$ is constant we are in the grey node, then it is either possible that $b_n=0$ in which case we are back in a situation where $Y_{n+1}$ is constant, or $b_n=p$ in which case $b_{n+1}=p-1,\dots,b_{n+p-1}=1$ and we are back in a situation where $Y_{n+p}$ is a constant vector. The last argument is that from Lemma \ref{lemma:plateau}, it is not possible to have a sequence of more than $p+1$ consecutive 0, hence it is not possible to loop more than $p+1$ times on the grey node.
   
   To conclude, if $Y_n$ is a constant vector we are on the grey node, and then the set of possibilities for the sequence $(b_i)_{n \leq i}$ is described by paths on the automata above. Furthermore it is not possible to loop more than $p+1$ consecutive times on the grey node, consequently the set of possibilities for the sequence $(b_i)_{n \leq i}$ is exactly the one described in the statement of the Theorem. Note that it is eventually possible to loop infinitely on the label 0, when there is no more grains on columns since Lemma \ref{lemma:plateau} won't apply anymore, hence the $0^\omega$. 
\end{proof}

\subsection{Proofs of Theorem \ref{theorem:main} and Theorem \ref{theorem:main2}}\label{ss:proof}

It remains to put the two pieces together to prove Theorem \ref{theorem:main}.

\setcounter{tmp}{\value{theoremc}}
\setcounter{theoremc}{\value{mytheorem}}
\begin{theorem}
\setcounter{theoremc}{\value{tmp}}
  There exists an $n$ in $\O(\log N)$ such that
  $$\pi(N)_{[n,\infty[} \in \left(\left(\bigplus \limits_{i=0}^{p+1} 0^i\right) \cdot p \cdot \ldots \cdot 2 \cdot 1\right)^*0^\omega$$
\end{theorem}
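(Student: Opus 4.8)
The plan is to assemble the theorem from the two substantial results already in hand: Proposition~\ref{lemma:average}, which locates an index $n\in\O(\log N)$ at which the averaging-system orbit $Y_n$ becomes a constant vector of $\Z^p$, and Lemma~\ref{lemma:yn2wave}, which turns ``$Y_n$ is constant'' into the announced regular-expression description of $\pi(N)_{[n,\infty[}$. Chaining the two yields the statement directly, so the proof of the theorem proper is just a gluing step; all the work has been front-loaded into the averaging-system analysis and into Lemma~\ref{lemma:yn2wave}.

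Concretely I would argue as follows. Recall that the orbit $(X_i)$ of $X_0=(N,0,\dots,0,a_0)^T$ under $X_{n+1}=A\,X_n+\frac{b_n}{p}J$ faithfully encodes $\pi(N)$ through $b_n=a_{n-p}-(p+1)a_n+pa_{n+1}$, and that $Y_n=P\,B'^{-1}X_n$ records the consecutive shot-vector differences. Apply Proposition~\ref{lemma:average} to obtain $n_0\in\O(\log N)$ with $Y_{n_0}=(\alpha,\dots,\alpha)^T$, $\alpha\in\Z$. Then invoke Lemma~\ref{lemma:yn2wave} at the index $n_0$: because $Y_{n_0}$ is constant, integrality of the last coordinate of $Y_{n_0+1}$ forces $b_{n_0}\in\{0,p\}$; a value $0$ leaves the vector constant, while a value $p$ forces the descending block $b_{n_0},b_{n_0+1},\dots,b_{n_0+p-1}=p,p-1,\dots,1$, after which the vector is again constant. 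Since $(b_i)_{i\in\N}$ contains no run of more than $p+1$ consecutive zeros before the final all-zero tail (Lemma~\ref{lemma:plateau}), every zero-run appearing in this forced dynamics has length at most $p+1$. Reading $b_i=\pi(N)_i$ off this automaton gives exactly $\pi(N)_{[n_0,\infty[}\in\left(\left(\bigplus_{i=0}^{p+1}0^i\right)\cdot p\cdot\ldots\cdot 2\cdot 1\right)^*0^\omega$, the trailing $0^\omega$ absorbing the eventual infinite run of zeros for which the plateau bound no longer applies. Taking $n=n_0$ finishes the proof.

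The genuine obstacle is entirely inside Proposition~\ref{lemma:average}: showing that the averaging system collapses to a constant vector within $\O(\log N)$ steps despite the rounding term $\frac{b_n}{p}K$. Neither a purely continuous nor a purely discrete argument suffices on its own. The continuous estimate, using that the eigenvalues of $DM$ are $0,\lambda_1,\dots,\lambda_{p-1}$ with $|\lambda_i|\le\frac{p-1}{p}$ by Enestr\"om--Kakeya (Lemmas~\ref{lemma:roots} and~\ref{lemma:eigenDM}), only shows that $\overline{y}_n-\underline{y}_n$ enters a ball of constant radius (Lemma~\ref{lemma:large}); the integrality argument (Lemma~\ref{lemma:small}) only forces a decrease of at least $1$ every $p$ steps, hence mere linear convergence. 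One must run the exponential phase first to reach an $\O(1)$-sized gap in $\O(\log N)$ steps, then clean up with the linear phase in $\O(1)$ further steps. By contrast, once $Y_n$ is constant, Lemma~\ref{lemma:yn2wave} is a finite-automaton bookkeeping argument with no real difficulty beyond the plateau bound.
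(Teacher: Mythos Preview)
Your proposal is correct and follows exactly the paper's approach: the paper's own proof of this theorem is a two-line gluing argument that invokes Proposition~\ref{lemma:average} to obtain an $n\in\O(\log N)$ with $Y_n$ constant, then applies Lemma~\ref{lemma:yn2wave} at that $n$. Your additional commentary on what happens inside those two results is accurate but goes beyond what the paper records at this point, since that content is already contained in the proofs of Proposition~\ref{lemma:average} and Lemma~\ref{lemma:yn2wave} themselves.
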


\begin{proof}
  From Proposition \ref{lemma:average}, there exists an $n$ in $\O(\log N)$ such that $Y_n$ is a constant vector. We can therefore apply Lemma \ref{lemma:yn2wave} for this $n$ and the result follows.
\end{proof}

An example of the various representations of fixed points ($X_n$, $Y_n$ and wavy shape) for $\pi(2000)$ with $p=4$ is given in appendix \ref{a:2000}.

In order to prove the refinement of Theorem \ref{theorem:main2}, it seems necessary to overcome the ``static'' study --- direct, for any fixed point --- presented here and consider the dynamic of sand grains on columns beyond $\O(\log N)$, from $\pi(0)$ to $\pi(N)$. We recall that $\pi(N)$ can be computed inductively, using the relation
$$\text{for all } k>0,~ \pi(\pi(k-1)^{\downarrow 0})=\pi(k)$$
where $\sigma^{\downarrow 0}$ denotes the configuration obtained by adding one grain on column 0 of $\sigma$. We start from $\pi(0)$ and inductively compute $\pi(1),\pi(2),\dots,\pi(N\!-\!1)$ and $\pi(N)$ by repeating the addition of one grain on column 0. The sequence of firings from $\pi(k\!-\!1)^{\downarrow 0}$ to $\pi(k)$ is called the $k^{th}$ {\em avalanche} (see subsection \ref{ss:hourglass}). 

An interesting Corollary of Theorem \ref{theorem:main}, proven below, is that for any $p$ and $N$, there is an index $n$ in $\O(\log N)$ such that the $N^{th}$ avalanche fires a set of consecutive columns (without missing any) on the right of that index $n$ (conjectured in \cite{LATA} and proven only for $p=2$). We now formally introduce this important property on avalanches, which, when it is verified starting from an index $n$, leads to regularities in the avalanche process beyond column $n$ (see \cite{LATA} and \cite{MFCS} for details). We will see that a Corollary of Theorem \ref{theorem:main} is that this property is verified on the $k^{th}$ avalanche starting from an index in $\O(\log k)$.

We say that there is a {\em hole} at position $i$ in an avalanche $s^k$ if and only if $i \notin s^k$ and $(i\!+\!1) \in s^k$. An interesting property of an avalanche is the absence of hole from an index $l$, which tells that,
$$\text{there exists an } m \text{ such that }\begin{array}[t]\{{l}. \text{for all } i \text{ with } l \leq i \leq m, \text{ we have } i \in s^k\\ \text{for all } i \text{ with } m < i, \text{ we have } i \notin s^k\end{array}$$
namely, from column $l$, a set of consecutive columns is fired, and nothing else. We say that an avalanche $s^k$ is {\em dense starting from} an index $l$ when $s^k$ contains no hole $i$ with $i \geq l$. We have already explained in \cite{LATA} that this property induces a kind of ``pseudo linearity'' on avalanches, that it somehow ``breaks'' the criticality of avalanche's behavior and let them flow smoothly along the sand pile. Let us introduce a formal definition:

\begin{definition}
  $\mathcal L'(p,k)$ is the minimal column such that the $k^{th}$ avalanche is dense starting at $\mathcal L'(p,k)$:
  $$\mathcal L'(p,k)=\min \{ l \in \N ~|~ \exists m \in \N \text{ such that } \forall l \leq i \leq m, i \in s^k \text{ and } \forall i > m, i \notin s^k \}$$
  Then, the global density column $\mathcal L(p,N)$ is defined as:
  $$\mathcal L(p,N)=\max \{\mathcal L'(p,k) ~|~ k \leq N \}$$ 
\end{definition}
The global density column $\mathcal L(p,N)$ is the smallest column number starting from which the $N$ first avalanches are dense (contain no hole). A Corollary of Theorem \ref{theorem:main} is:

\begin{corollary}\label{corollary:snowball}
For all parameter $p$, $\mathcal L(p,N)$ is in $\O(\log N)$.
\end{corollary}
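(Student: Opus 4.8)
The plan is to show that for each $k$, the $k^{th}$ avalanche $s^k$ can only fail to be dense on the ``disordered'' prefix of $\pi(k-1)$, which Theorem \ref{theorem:main} guarantees has length $\O(\log k)$. First I would invoke Theorem \ref{theorem:main} for the configuration $\pi(k-1)$: there exists $n_{k-1}$ in $\O(\log k)$ such that $\pi(k-1)_{[n_{k-1},\infty[}$ consists only of waves $p \cdot (p-1) \cdot \ldots \cdot 1$, possibly separated by short blocks of $0$'s of length at most $p+1$. The key claim is then that $\mathcal L'(p,k) \leq n_{k-1} + c$ for some constant $c$ depending only on $p$; taking the maximum over $k \leq N$ and using that $n_{k-1} \in \O(\log k) \subseteq \O(\log N)$ yields $\mathcal L(p,N) \in \O(\log N)$.

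To prove the key claim I would argue directly about the leftmost avalanche $s^k$ starting from $\pi(k-1)^{\downarrow 0}$. Recall (subsection \ref{ss:hourglass}) that each column is fired at most once in an avalanche, and that firings are performed leftmost-first. The idea is that on the wavy region the local structure is so rigid that a hole cannot be created: if column $i+1$ is fired, then just before that firing we must have $b_{i+1} > p$, and since in the wavy region the height differences lie in $\{0,1,\ldots,p\}$ with a very constrained pattern, the only way $b_{i+1}$ can have been pushed above $p$ is by a prior firing at column $i$ (which adds $p$ to $b_i$... here $b_{i+1}$ is increased by a firing at $i+2$ by $1$, and by a firing at $i+1-p$ by $1$). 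I would track how a single added grain propagates: in a pure wave region, firing column $i$ turns the local configuration $\ldots p (p-1) \ldots 1 \ldots$ in a way that forces column $i+1$ to become fireable only after $i$ has fired, so the fired set is an interval. The short $0$-blocks of length $\leq p+1$ are the only places where propagation could ``skip'', but a block of $0$'s of bounded length absorbs at most a bounded amount and either stops the avalanche or is entirely crossed — in neither case is a hole left to its right. Making this rigorous amounts to a finite case analysis on how the wave pattern, together with $0$-blocks of length $\leq p+1$, transforms under one firing; this is essentially the computation already carried out (for $p=2$) in \cite{LATA}, now available for general $p$ because Theorem \ref{theorem:main} supplies the structural input that was previously missing.

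The main obstacle I anticipate is precisely this local analysis of avalanche propagation through the wavy region: one must verify that once the avalanche enters the structured part, the set of fired columns to the right of $n_{k-1}$ is a genuine interval, i.e. that no $0$-block nor any boundary effect between consecutive waves can create a hole. A secondary subtlety is bookkeeping the constant $c$: the avalanche might enter the structured region, fire across one $0$-block, and only then settle into a pattern with no holes, so $\mathcal L'(p,k)$ could be slightly larger than $n_{k-1}$ — but by at most an additive constant (at most one $0$-block of length $\leq p+1$ plus one wave of length $p$), which is harmless for the $\O(\log N)$ conclusion. Once the interval property beyond $n_{k-1}+c$ is established for every $k \leq N$, the corollary follows immediately by taking the maximum.
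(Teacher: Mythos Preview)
Your overall strategy is exactly the paper's: invoke Theorem~\ref{theorem:main} for $\pi(k-1)$, show that the $k^{th}$ avalanche is dense past the structured region up to an additive constant, and take the maximum over $k\le N$. The paper even uses the same constant $c\le p+1$ (the length $z_1$ of the first $0$-block).

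Where your sketch goes wrong is the \emph{mechanism} you propose for ``no holes''. You argue that for $i+1$ to fire, $b_{i+1}$ must have been pushed above $p$ ``by a prior firing at column $i$''. But firing column $i$ changes $b_{i-1}$, $b_i$, and $b_{i+p}$ only; it does \emph{not} touch $b_{i+1}$. The only firings that raise $b_{i+1}$ are at $i+2$ (which adds $p$, not $1$ as you wrote) and at $i+1-p$ (which adds $1$). So the causal direction you suggest (``$i$ must fire before $i+1$ can'') is not how the avalanche propagates rightward. The paper instead proves the contrapositive of the no-hole statement: if $i+1\in s^{k}$, then firing $i+1$ adds $p$ to $b_i$; when $\pi(k-1)_i>0$ this forces $i$ unstable, hence $i\in s^{k}$. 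The delicate case is $\pi(k-1)_i=0$, handled by a contradiction that tracks columns $i-p$ and $i-p+1$ and uses the wave structure to show $\pi(k-1)_{i-p}>0$. If you redo the finite case analysis with the correct bookkeeping (which height differences a firing actually modifies, and arguing the implication $(i+1)\in s^{k}\Rightarrow i\in s^{k}$ rather than the converse), your proof will coincide with the paper's.
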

\begin{proof}
  Let $p$ be a fixed parameter. In order to prove the result, we will prove that if
  $$\pi(k)_{[n,\infty[} \in \left(\left(\bigplus \limits_{i=0}^{p+1} 0^i\right) \cdot p \cdot \ldots \cdot 2 \cdot 1\right)^*0^\omega$$
  then the next avalanche $s^{k+1}$ is dense starting from $n\!+\!c$ with $c \leq p\!+\!1$, {\em i.e.}, $\mathcal L'(p,k+1) \leq n\!+\!c$. Since such an $n$ is in $\O(\log k)$ for all $k$ from Theorem \ref{theorem:main}, the Corollary follows.
  
  Let us consider the following case disjunction.
  \begin{itemize}
    \item If the $k\!+\!1^{th}$ avalanche ends before column $n$, formally if $\max s^{k+1} < n$, then obviously $\mathcal L'(p,k+1) < n$ because for all $l$, $\mathcal L'(p,l) \leq \max s^l$.
    \item If the $k\!+\!1^{th}$ avalanche ends beyond column $n$, formally if $\max s^{k+1} \geq n$, then we will consider the dynamic of the avalanche process $s^{k+1}$ on columns greater than $n$. There is an intuitive notion of {\em time} during an avalanche, induced by the order of the firings: $s^{k+1}_t$ is the fired column at time $t$. For the sake of clarity, let us precise our hypothesis given by Theorem \ref{theorem:main} on the shape of $\pi(k)_{[n,\infty[}$. Let $l$ and $z_1,\dots,z_l$ be the integers such that
    $$\pi(k)_{[n,\infty[} = \overset{z_1}{\overbrace{0 \cdot \ldots \cdot 0}} \cdot p \cdot \ldots \cdot 2 \cdot 1 \cdot \overset{z_2}{\overbrace{0 \cdot \ldots \cdot 0}} \cdot p \cdot \ldots \cdot 2 \cdot 1 \cdot \dots \cdot \overset{z_l}{\overbrace{0 \cdot \ldots \cdot 0}} \cdot p \cdot \ldots \cdot 2 \cdot 1\cdot 0^\omega$$
    
    The aim is to prove that $s^{k+1}$ is dense starting from $n+z_1$, which is proven by induction on the following Statement:
    \begin{statement}
      For all $i \geq n+z_1$, if $i \notin s^{k+1}$ then $(i\!+\!1) \notin s^{k+1}$.
    \end{statement}
    Let us prove this Statement for all $i \geq n+z_1$, by considering two cases.
    \begin{itemize}
      \item If $\pi(k)_i > 0$, we prove the contraposition of the Statement: if $(i\!+\!1) \in s^{k+1}$ then $i \in s^{k+1}$. The firing at $(i\!+\!1)$ gives $p$ units of height difference to column $i$, which becomes unstable and thus $i \in s^{k+1}$.
      \item If $\pi(k)_i = 0$, we prove the Statement by contradiction: let us suppose that $(i\!+\!1) \in s^{k+1}$. Column $i$ can: receive $p$ units of height difference from $(i\!+\!1)$, and 1 unit of height difference from $(i\!-\!p)$. By hypothesis, $i$ receives $p$ units of height difference thus $(i\!-\!p) \notin s^{k+1}$ otherwise $i$ receives $p\!+\!1$ units of height difference during the avalanche process and $i \in s^{k+1}$.
      
      According to the shape of $\pi(k)$ beyond column $n$, there are two possibilities for the value of $\pi(k)_{i+1}$. Let us show that in any case we have $(i\!-\!p\!+\!1) \in s^{k+1}$.
      \begin{itemize}
        \item If $\pi(k)_{i+1}=0$, then since $(i\!+\!1)$ is fired it should receive at least $p\!+\!1$ units of height difference, which is possible only if $(i\!-\!p\!+\!1) \in s^{k+1}$.
        \item If $\pi(k)_{i+1}=p$, then we prove it in two steps. Firstly, from the locality of the rule (it involves columns at distance at most $p$) and the stability of $\pi(k)$, a column can't be fired if none of the $p$ preceding columns has been fired. Secondly, for all $j$ with $0 < j < p$, we have $\pi(k)_{i+1-j} < p$ thus by induction on fact that such a column can't be fired if its successor (the consecutive column on its right) has been fired, $i\!+\!1$ is fired before any of the $p$ preceding columns. Consequently, it is not possible to fire any column greater than $i\!+\!1$ before column $i\!+\!1$ itself. We can therefore conclude that $i\!+\!1$ received units of height difference from the left of the configuration, that is, from $(i\!-\!p\!+\!1)$.
      \end{itemize}
      We proved that $(i\!-\!p\!+\!1) \in s^{k+1}$. Now, if $(i\!+\!1) \in s^{k+1}$ then the $z_\iota$ associated to column $i$ is at most $p\!-\!1$, otherwise the avalanche process can't go beyond this flat surface (again, from the locality of the rule). As a consequence, $\pi(k)_{i-p} > 0$, and since $(i\!-\!p\!+\!1) \in s^{k+1}$ it gives $p$ units of height difference to $i\!-\!p$ thus it becomes unstable, but $(i\!-\!p) \notin s^{k+1}$, which is a contradiction.
    \end{itemize}
  \end{itemize}  
\end{proof}

\setcounter{tmp}{\value{theoremc}}
\setcounter{theoremc}{\value{mytheorem2}}
\begin{theorem}
\setcounter{theoremc}{\value{tmp}}
  There exists an $n$ in $\O(\log N)$ such that
  $$\pi(N)_{[n,\infty[} \in (p \cdot \ldots \cdot 2 \cdot 1)^*[0](p \cdot \ldots \cdot 2 \cdot 1)^*0^\omega$$
  where $[0]$ stands for at most one symbol 0.
\end{theorem}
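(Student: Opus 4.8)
The plan is to upgrade Theorem~\ref{theorem:main} by following the whole history of the pile, from $\pi(0)=0^\omega$ up to $\pi(N)$, one avalanche at a time, and showing that essentially none of the zero-blocks permitted by Theorem~\ref{theorem:main} are actually used. First I would fix $N$ and choose an index $n\in\O(\log N)$ large enough that $\pi(N)_{[n,\infty[}$ already lies in the language of Theorem~\ref{theorem:main} and that every avalanche $s^k$ with $k\le N$ is dense from column $n-p$ (both possible since all these thresholds are in $\O(\log N)$, by Theorem~\ref{theorem:main} and Corollary~\ref{corollary:snowball}). Then I would prove, by induction on $k$ from $0$ to $N$, the stronger claim: \emph{$\pi(k)$ has at most one isolated $0$ among its columns of index $\ge n$ and before its trailing $0^\omega$}. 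The base case is trivial, and the case $k=N$, combined with Theorem~\ref{theorem:main}, is exactly the refined language of Theorem~\ref{theorem:main2}. The point of fixing $n$ once and for all up front is that it does not drift along the induction, so it remains logarithmic.

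For the inductive step I would compute the exact effect of the $(k\!+\!1)$-th avalanche on the height differences beyond column $n$. Since $s^{k+1}$ is dense from $n-p$ it fires a full interval of columns there, ending at $m=\max s^{k+1}$; plugging $\Delta a_i=\mathbf{1}[i\in s^{k+1}]$ (each column is fired at most once) into $b_j=a_{j-p}-(p\!+\!1)a_j+p\,a_{j+1}$ written for both $\pi(k)$ and $\pi(k\!+\!1)$, the changes for $j\ge n$ collapse to the very simple form $\Delta b_m=-p$, $\Delta b_j=+1$ for $m<j\le m+p$, and $\Delta b_j=0$ otherwise. Because both configurations are fixed points ($0\le b\le p$), this forces $b_m=p$ in $\pi(k)$ (so $m$ is the first column of a wave) and forces column $m+p$ of $\pi(k)$ to carry a $0$ (the next wave would overflow after the $+1$). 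Reading this back: the avalanche either transports the already-present isolated $0$ exactly one wave-width ($p$ columns) to the left, sliding the intervening wave one column to the right, or---if there is no isolated $0$ beyond column $n$---it bites into the trailing $0^\omega$, sliding the last wave one column right and thereby spawning exactly one isolated $0$. In either case the count of isolated $0$'s stays $\le 1$, \emph{provided the avalanche never fires a column that carries a $0$}: if it did, the old $0$ would survive ($\Delta b$ is zero there) while a fresh one is spawned at the end, giving two.

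Hence the crux is the dynamical lemma: if $\pi(k)$ has a $0$ at a column $d$ lying in the dense range of $s^{k+1}$, then $d\notin s^{k+1}$; equivalently, the avalanche ``bounces off'' the isolated $0$, stopping at column $d-p$ rather than crossing it. I would prove this by a local analysis of the leftmost firing strategy on the block of consecutive waves immediately to the left of $d$: when the cascade reaches that block, the firings it triggers pour only a single unit into the $0$-column and only a single unit into the column just left of it, so neither becomes unstable and the cascade dies there; the real work is to verify that the leftmost strategy indeed produces that firing pattern on a run of waves, which is the genuinely sequential part of the argument (its $p=2$ instance being what \cite{LATA,MFCS} carry out). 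With this lemma the induction closes: an isolated $0$, once present in the wavy region, is simply pushed $p$ columns left by each avalanche that reaches it---eventually sliding out of the window $[n,\infty[$ altogether---and a new isolated $0$ is created at the far end only when none is present. I expect this bounce-off lemma, not the mechanical net-effect computation, to be the main obstacle.
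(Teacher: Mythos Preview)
Your plan is essentially the paper's own proof: induct over the avalanches from $\pi(0)$ to $\pi(N)$, use Corollary~\ref{corollary:snowball} to make each avalanche a single interval beyond $n$, compute the net effect on the $b_j$'s, and invoke a ``bounce-off'' claim to locate the stopping column $m$ at the wave just left of the unique $0$. The paper states this last step even more tersely than you do (``the avalanche triggers a chain reaction that stops on the last wave before the $0$ is encountered''), so your local analysis of the leftmost strategy is already more than it offers.

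One technical point to tighten: in your inductive step you use the wavy structure of $\pi(k)$ beyond $n$ (to read ``$b_m=p$'' as ``$m$ starts a wave'' and to force $b_{m+p}=0$), but your setup only invokes Theorem~\ref{theorem:main} for $\pi(N)$. You need it for every $\pi(k)$ with $k\le N$; this is free, since the hidden constant in the $\O(\log k)$ of Theorem~\ref{theorem:main} depends only on $p$, so a single $n=C\log N$ works uniformly. Relatedly, the paper actually carries the slightly stronger form with a leading block $\bigplus_{i=0}^{p+1}0^i$ in the induction; this absorbs the boundary case where the avalanche stops in $[n-p,n)$ and only the $+1$'s of your $\Delta b$ formula land in the window---a case your write-up does not yet cover.
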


\begin{proof}
  We prove a slightly stronger result, that there is an $n$ in $\O(\log N)$ such that
  $$\pi(N)_{[n,\infty[} \in \left(\bigplus \limits^{p+1}_{i=0} 0^i\right)(p\cdot \ldots \cdot 2 \cdot 1)^*[0](p\cdot \ldots \cdot 2 \cdot 1)^*0^\omega$$
  
  The proof works by induction on $N$, with the base case coming from Theorem \ref{theorem:main}. For the induction, suppose the result holds for $\pi(N)$ with the wavy pattern $0^i(p \cdot \ldots \cdot 1)^x0(p \cdot \ldots \cdot 1)^y$, with $i \geq 0$, $x>0$ and $y \geq 0$, beginning at $n$ in $\O(\log N)$. In addition, there is a minimal index $l$ such that the $(N\!+\!1)^{th}$ avalanche fires a set of consecutive columns, from $l$ to $m$, and from Corollary \ref{corollary:snowball} this $l$ is in $\O(\log N)$. Without loss of generality, we consider that $l \leq n-p$.
  \begin{itemize}
    \item if $m < n+i$, either the wavy pattern is not affected (if $m \leq n-p$), or Theorem \ref{theorem:main} ensures that there will always exist an index $n$ in $\O(\log N)$ such that the wavy pattern is preserved;
    \item if $n+i \leq m$, we first note that $i \neq p+1$ because an avalanche cannot ``cross'' a set of $p+1$ consecutive 0. In this case, the avalanche triggers a chain reaction that stops on the last wave before the 0 is encountered, and thus spans the $x$ first waves, {\em i.e.}, $m=n+(x-1)p$. A simple look at the transition rule indicates that the firing of columns $n-p$ to $m$ shifts the wavy pattern to the following: $0^i(p \cdot \ldots \cdot 1)^{x-1}0(p \cdot \ldots \cdot 1)^{y+1}$.
  \end{itemize}
  In any case, the result holds for $\pi(N+1)$. 
\end{proof}



\section{Conclusive discussion}\label{s:conclusion}

The result of this paper stresses the fact that sand pile models are at the edge between discrete and continuous systems. Indeed, when there are very few sand grains, each one seems to contribute greatly to the global shape of the configuration. However, when the number of grains is very large, a particular sand grain seems to have no importance to describe the shape of a configuration. The result can be interpreted has a separation of the discrete and continuous parts of the system, despite it is not clearly stated. On one hand, the left seemingly unordered part, interpreted as reflecting the discrete behavior, prevents regularities to emerge. On the other hand, the right and ordered part, interpreted as reflecting the continuous behavior, lets a regular and smooth pattern to come into view.

Nevertheless, the separation between discrete and continuous behaviors may be challenged because the continuous part emerges from the discrete part. We have two remarks about this latter fact. Firstly, the consequence seems to be a slight bias appearing on the continuous part: it is not fully homogeneous ---that is, with the exact same height difference at each index--- which would have been expected for a continuous system, but a ---very small--- pattern is repeated. This remark actually seems to be misleading because this exact same bias disappears when we consider an initial configuration with $p$ consecutive indices with $N$ units of height difference, thus it looks like the bias comes from the gap between the unicity of the initial column compared to the parameter $p$ of the rule. Secondly, if we consider the asymptotic form of a fixed point, the relative size of the discrete part is null. This, regarding the intuition described above that when the number of grains is very large then a particular grain has no importance, is satisfying.

Let us save the last words to a distracting application to a famous paradox. Someone who has a very little amount of money is called {\em poor}. Someone {\em poor} who receives one cent remains {\em poor}. Nonetheless, if the increase by 1 cent is repeated a great number of time then the person becomes {\em rich}. The question is: when exactly does the person becomes {\em rich}? An answer may be that {\em richness} appears when money creates waves...

\bibliographystyle{alpha}
\bibliography{biblio}

\newcommand{\etalchar}[1]{$^{#1}$}
\begin{thebibliography}{KNWZ89}

\bibitem[BT88]{bak88}
P.~Bak and K.~Tang, C.~Wiesenfeld.
\newblock Self-organized criticality.
\newblock {\em Phys. Rev. A}, 38(1):364--374, Jul 1988.

\bibitem[Cre96]{creutz96}
Michael Creutz.
\newblock Cellular automata and self organized criticality.
\newblock In {\em in Some New Directions in Science on Computers}, 1996.

\bibitem[DL98]{durandlose98}
J{\'e}r{\^o}me~Olivier Durand-Lose.
\newblock Parallel transient time of one-dimensional sand pile.
\newblock {\em Theor. Comput. Sci.}, 205(1-2):183--193, 1998.

\bibitem[FMP07]{formenti07}
Enrico Formenti, Beno\^{\i}t Masson, and Theophilos Pisokas.
\newblock Advances in symmetric sandpiles.
\newblock {\em Fundam. Inform.}, 76(1-2):91--112, 2007.

\bibitem[FPPH11]{formenti11}
Enrico Formenti, Van~Trung Pham, Ha~Duong Phan, and Tran Thi~Thu Huong.
\newblock Fixed point forms of the parallel symmetric sandpile model.
\newblock {\em CoRR}, abs/1109.0825, 2011.

\bibitem[GBLJ09]{grauwin}
S.~Grauwin, {\'E}.~Bertin, R.~Lemoy, and P.~Jensen.
\newblock Competition between collective and individual dynamics.
\newblock {\em National Academy of Sciences USA}, 106(49):20622--20626, 2009.

\bibitem[GG97]{enestromkakeya}
R.~B. Gardner and N.~K. Govil.
\newblock Some generalizations of the eneström-kakeya theorem.
\newblock {\em Acta Math. Hungar.}, 74(1-2):125--134, 1997.

\bibitem[GK93]{goles93}
Eric Goles and Marcos~A. Kiwi.
\newblock Games on line graphs and sand piles.
\newblock {\em Theor. Comput. Sci.}, 115(2):321--349, 1993.

\bibitem[GLM{\etalchar{+}}04]{phan04}
E.~Goles, M.~Latapy, C.~Magnien, M.~Morvan, and H.D. Phan.
\newblock Sandpile models and lattices: a comprehensive survey.
\newblock {\em Theor. Comput. Sci.}, 322(2):383--407, 2004.

\bibitem[GM10]{goles10}
{E}ric {G}oles and {B}runo {M}artin.
\newblock {C}omputational {C}omplexity of {A}valanches in the {K}adanoff
  {T}wo-dimensional {S}andpile {M}odel.
\newblock In {TUCS}, editor, {\em {P}roceedings of {JAC} 2010 {J}ourn{\'e}es
  {A}utomates {C}ellulaires 2010}, pages 121--132, {T}urku {F}inland, 12 2010.
\newblock {F}.1.1.

\bibitem[GMG02]{gajardo}
Anah\'{\i} Gajardo, Andr{\'e}s Moreira, and Eric Goles.
\newblock Complexity of {Langton's} ant.
\newblock {\em Discrete Applied Mathematics}, 117(1-3):41--50, 2002.

\bibitem[GMP02]{goles02}
Eric Goles, Michel Morvan, and Ha~Duong Phan.
\newblock The structure of a linear chip firing game and related models.
\newblock {\em Theor. Comput. Sci.}, 270(1-2):827--841, 2002.

\bibitem[GPST95]{propp}
David Gale, Jim Propp, Scott Sutherland, and Serge Troubetzkoy.
\newblock Further travels with my ant.
\newblock {\em Mathematical Entertainments column, Mathematical Intelligencer},
  17:48--56, 1995.

\bibitem[HMU03]{hopcroft}
John~E. Hopcroft, Rajeev Motwani, and Jeffrey~D. Ullman.
\newblock {\em Introduction to automata theory, languages, and computation -
  international edition (2. ed)}.
\newblock Addison-Wesley, 2003.

\bibitem[KH96]{hasselblatt}
A.~Katok and B.~Hasselblatt.
\newblock {\em Introduction to the Modern Theory of Dynamical Systems}.
\newblock Encyclopedia of Mathematics and Its Applications. Cambridge
  University Press, 1996.

\bibitem[KNWZ89]{kadanoff89}
Leo~P. Kadanoff, Sidney~R. Nagel, Lei Wu, and Su-min Zhou.
\newblock Scaling and universality in avalanches.
\newblock {\em Phys. Rev. A}, 39(12):6524--6537, Jun 1989.

\bibitem[Pha08]{phan08}
Thi Ha~Duong Phan.
\newblock Two sided sand piles model and unimodal sequences.
\newblock {\em ITA}, 42(3):631--646, 2008.

\bibitem[PPP11]{PSSPM}
K.~Perrot, H.D. Phan, and T.~Van Pham.
\newblock {On the set of Fixed Points of the Parallel Symmetric Sand Pile
  Model}.
\newblock {\em AUTOMATA 2011, DMTCS proceedings}, November 2011.

\bibitem[PR11a]{LATA}
K{\'e}vin Perrot and {\'E}ric R{\'e}mila.
\newblock {Avalanche Structure in the Kadanoff Sand Pile Model}.
\newblock {\em LATA 2011, LNCS 6638 proceedings}, may 2011.

\bibitem[PR11b]{MFCS}
K{\'e}vin Perrot and {\'E}ric R{\'e}mila.
\newblock {Transduction on Kadanoff Sand Pile Model Avalanches, Application to
  Wave Pattern Emergence.}
\newblock {\em MFCS 2011, LNCS 6907 proceedings}, August 2011.

\bibitem[Wea48]{weaver}
W.~Weaver.
\newblock {Science and Complexity}.
\newblock {\em American Scientist}, 36(536), 1948.

\end{thebibliography}

\newpage


\appendix

%
%

\section{There is no plateau of length larger than $p+1$}\label{a:plateau}

A plateau is a set of at least two non empty and consecutive columns of equal height. The length of a plateau is the number of columns composing it.

\begin{lemma}\label{lemma:plateau}
  For any $N$ and any configuration $\sigma$ such that $(N,0^\omega) \overset{*}{\to} \sigma$, in $\sigma$ there is no plateau of length strictly greater than $p+1$.
\end{lemma}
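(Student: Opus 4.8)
\textbf{Proof plan for Lemma \ref{lemma:plateau}.}

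The plan is to argue by contradiction via a minimal counterexample in the evolution starting from $(N,0^\omega)$. Suppose some reachable $\sigma$ contains a plateau of length $\geq p+2$, i.e. indices $i, i+1, \dots, i+p+1$ with $h_i = h_{i+1} = \dots = h_{i+p+1}$ and $h_i > 0$; equivalently $b_i = b_{i+1} = \dots = b_{i+p} = 0$ with $h_i>0$. First I would record two easy structural facts. (1) In the initial configuration $(N,0^\omega)$ there is no plateau at all (the nonzero heights are just column $0$), so any plateau must be \emph{created} by some firing along the path $(N,0^\omega) \overset{*}{\to} \sigma$. (2) A single firing at column $j$ changes only $b_{j-1}$ (by $+p$), $b_j$ (by $-(p-1)$) and $b_{j+p}$ (by $+1$): so it can turn at most finitely many entries to $0$, and more importantly $b_j$ \emph{decreases} by $p-1 \geq 0$ but stays $\geq 1$ after firing only if it was $\geq p+1$ before — in particular a firing at $j$ can make $b_j = 0$ only if $b_j$ was exactly $p-1$ before, which contradicts firability ($b_j > p$ needed). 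Hence a firing at $j$ never creates a $0$ at position $j$ itself; it can only create $0$s "from the left" at $j-1$ is impossible too (it adds $p$ there), so the only new possibility is... actually no entry becomes $0$ that wasn't already. The clean statement I want: \emph{no firing ever turns a positive $b_\ell$ into $0$}. Indeed $b_{j-1}$ and $b_{j+p}$ only increase, and $b_j$ goes from $>p$ to $>1$. So the set of indices $\ell$ with $b_\ell = 0$ can only grow... wait, that's the wrong direction for "creating a plateau of length $p+2$". Let me reconsider: a plateau of $0$s of a given length can be extended if a firing makes some $b_{j+p}$ jump from, say, $1$ to $2$ adjacent to existing $0$s — no, that destroys a $0$. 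So in fact the length of any maximal run of consecutive $0$s in $b$ among \emph{nonzero-height} columns can only \emph{decrease or stay} under firing, except possibly at the frontier with the $0^\omega$ tail.

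So the real mechanism to watch is the frontier: as grains advance, a new column becomes nonzero-height, possibly extending a run of $b=0$ entries that previously lay partly in the "empty region". Concretely, I would track the rightmost nonempty column, say $R$, and the run of $0$s in $b$ immediately to the left of where the profile "meets the ground". The key inductive claim is: \textbf{along the whole evolution, every maximal plateau (run of equal positive heights) has length $\leq p+1$.} I would prove this by induction on the number of firings. Base case: $(N,0^\omega)$ has no plateau. Inductive step: assume it holds for $b$, and $b \overset{j}{\to} b'$; I must show $b'$ has no plateau of length $\geq p+2$. A new plateau in $b'$ of length $\geq p+2$ would correspond to a run $b'_{i} = \dots = b'_{i+p} = 0$ (with positive heights). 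Since firing at $j$ touches only $b_{j-1}, b_j, b_{j+p}$, and we showed $b_{j-1}, b_{j+p}$ only increase while $b_j$ stays $\geq 2 > 0$, \emph{no new $0$ is created among these three positions} — so the set $\{\ell : b'_\ell = 0\}$ is contained in $\{\ell : b_\ell = 0\}$. Therefore any run of $p+1$ consecutive $0$s in $b'$ was already a run of $p+1$ consecutive $0$s in $b$; the only thing that could have changed is whether the heights on that run are positive. But heights on that run are positive in $b'$ means the run lies at or left of the rightmost nonempty column of $b'$. If the run already consisted of positive-height columns in $b$, we contradict the induction hypothesis directly. So the run must have been "partly in the empty tail" in $b$ and become fully positive-height in $b'$ — but firing at $j$ increases the rightmost nonempty column by at most... here I need: a firing at $j$ pushes $p$ units to columns $j+1, \dots, j+p$, so the rightmost nonempty column increases by at most $p$. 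That lets the newly-activated region be at most $p$ columns, which could in principle combine with existing left $0$s; this is exactly the case I must rule out carefully.

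For that last case I would argue directly about \emph{why} a firing at $j$ reaching into fresh territory cannot produce a length-$(p{+}2)$ plateau: after firing $j$, columns $j+1, \dots, j+p$ receive exactly one grain each, so their heights become $h_{j+1}+1, \dots, h_{j+p}+1$; for these to be all equal (and equal to the heights just left of $j+1$) we would need $h_{j+1} = \dots = h_{j+p}$ already, i.e. a plateau of length $p$ among columns $j+1,\dots,j+p$ in $b$ before firing — and, crucially, we would need column $j$ to also join it, $h'_j = h_j - p = h_{j+1}+1$, together with column $j-1$: $h'_{j-1} = h_{j-1} = h_j - p + 1$ as well if we want length $p+2$ starting at $j-1$. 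But $h_j - h_{j+1} > p$ is the firability condition, giving $h_j \geq h_{j+1} + p + 1$, so $h'_j = h_j - p \geq h_{j+1}+1$, with equality iff $h_j = h_{j+1}+p+1$; one then checks the arithmetic on $h_{j-1}$ and the column $j+p$ (which only gets $1$ unit, not enough to extend the flat run rightward by more than its existing extent) forces the total flat run to have length exactly $\leq p+1$. I expect \textbf{this final bookkeeping — carefully enumerating which of the three touched positions $j-1, j, j+p$ can sit at the ends of a freshly created flat run, and checking the height equalities are incompatible with length $\geq p+2$ — to be the main obstacle}, since it is where the specific value $p+1$ (rather than some other constant) actually enters, and it requires being careful about the interaction with the advancing frontier $0^\omega$. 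Everything else (no firing destroys a positive-height plateau bound, the empty initial configuration, the "$0$s only disappear, never appear" observation) is routine.
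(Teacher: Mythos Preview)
Your central claim --- that a firing never turns a positive $b_\ell$ into $0$ --- is false, and the error comes from a typo in Definition~\ref{definition2}. The correct rule (see the caption of Figure~\ref{fig:rule}, or derive it from Definition~\ref{definition1}: column $j$ loses $p$ grains while column $j+1$ gains one, so $b_j=h_j-h_{j+1}$ drops by $p+1$) is $b'_j=b_j-(p{+}1)$, not $b_j-(p{-}1)$. With the correct rule, firing at $j$ when $b_j=p+1$ (the minimal firable value) yields $b'_j=0$. So a new zero \emph{can} appear at position $j$, your ``the set $\{\ell:b'_\ell=0\}$ is contained in $\{\ell:b_\ell=0\}$'' step fails, and the whole frontier analysis you flag as the main obstacle is built on a wrong premise.

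The repair is short and is precisely the paper's argument. After firing at $j$ one always has $b'_{j-1}=b_{j-1}+p\geq p>0$ and $b'_{j+p}=b_{j+p}+1\geq 1>0$. Hence if $\tau$ contains a run $\tau_k=\cdots=\tau_{k+p}=0$, neither $j-1$ nor $j+p$ lies in $[k,k+p]$; since these two indices are exactly $p+1$ apart, any length-$(p{+}1)$ interval containing $j$ must contain one of them, so $j\notin[k,k+p]$ either. Thus the firing left every $b_\ell$ with $\ell\in[k,k+p]$ untouched, and one checks immediately that it also left the heights on $[k,k+p+1]$ untouched (the remaining positions $j\in\{k-p,\dots,k-1,k+p+1\}$ are excluded by the same two inequalities). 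So the plateau was already present in $\rho$, contradiction. The paper states this contrapositively: for any firing position $i$ with $i-1\in[k,k+p]$ or $i+p\in[k,k+p]$, the predecessor would need $\rho_{i-1}=-p$ or $\rho_{i+p}=-1$, violating non-negativity. No separate ``advancing frontier'' case analysis is needed, and the bound $p+1$ falls out directly from the gap between the two guaranteed nonzero positions $j-1$ and $j+p$.
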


\begin{proof}
  This proof proceeds by contradiction, using the fact that configurations are sequences of non-negative integers ($\mathcal H_1$). Suppose there exists a configuration $\sigma$ reachable from $(N,0^\omega)$ for some $N$, such that there is a plateau of length at least $p+2$ in $\sigma$. Since there is no plateau in the initial configuration $(N,0^\omega)$, and there is a finite number of steps to reach $\sigma$, there exists two configurations $\rho$ and $\tau$ such that $\rho \to \tau$ and such that there is a plateau of length at least $p+2$ in $\tau$, and none in $\rho$ ($\mathcal H_2$).
  Let $k$ be the leftmost column of the plateau of length at least $p+2$ in $\tau$, {\em i.e.} for all column $j$ between $k$ and $k+p$, $\tau_j=0$ ($\mathcal H_3$). We will now see that there is no $i$ such that $\rho \overset{i}{\to} \tau$, which completes the proof.
  \begin{itemize}
    \item if $i<k-p$ or $i>k+p+1$ then a firing at $i$ has no influence on columns between $k$ and $k+p+1$ and there is a plateau of length at least $p+2$ in $\rho$, contradicting $\mathcal H_2$.
    \item if $k-p \leq i \leq k$ then according to the rule definition we have $\tau_{i+p}=\rho_{i+p-1}-1$, and from $\mathcal H_3$ $\rho_{i+p}=0$ therefore $\tau_{i+p}<0$ which is not possible (recalled in $\mathcal H_1$).
    \item if $k < i \leq k+p+1$ then according to the rule definition we have $\tau_{i-1}=\rho_{i-1}-p$ and from $\mathcal H_3$ $\rho_{i-1}=0$ therefore $\tau_{i-1}<0$ which again is not possible from $\mathcal H_1$.
  \end{itemize}
\end{proof}

%
%

\section{The support of $\pi(N)$ is in $\Theta(\sqrt{N})$}\label{a:support}

We give bounds for the maximal index of a non-empty column in the fixed point $\pi(N)$ according to the number $N$ of grains, denoted $w(N)$. $w(N)$ can be interpreted as the support or width or size of $\pi(N)$. We consider a general model KSPM($p$) with $p$ a constant integer greater or equal to 1. A formal definition of $w(N)$ is for example $w(N)=w(\pi(N))=\min\{ i | \forall j \geq i, \pi(N)_j=0 \}$. See figure \ref{fig:frame}.

\begin{lemma}\label{lemma:support}
  The support of $\pi(N)$ is in $\Theta(\sqrt{N})$.
\end{lemma}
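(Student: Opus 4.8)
The claim is that $w(N) \in \Theta(\sqrt{N})$, so two bounds are needed: a lower bound $w(N) = \Omega(\sqrt{N})$ and an upper bound $w(N) = O(\sqrt{N})$. Both follow from a conservation argument: the total number of grains $N$ is invariant under the firing rule, so it can be recovered by summing the column heights $h_i$ of $\pi(N)$. The plan is to relate $N = \sum_{i \geq 0} h_i$ to the support $w = w(N)$ by controlling how fast the heights $h_i$ can decrease as $i$ grows, using the height-difference bound $0 \leq \pi(N)_i \leq p$ valid at a fixed point (the columns form a non-increasing sequence with consecutive drops at most $p$).

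\textbf{Lower bound.} Since $\pi(N)$ is a fixed point we have $b_i = h_i - h_{i+1} \leq p$ for every $i$, hence $h_i \leq p\,(w - i)$ for $0 \leq i < w$ (the height can drop by at most $p$ per step, and it must reach $0$ by column $w$). Therefore
\begin{equation*}
N = \sum_{i=0}^{w-1} h_i \leq \sum_{i=0}^{w-1} p\,(w-i) = p \cdot \frac{w(w+1)}{2},
\end{equation*}
which gives $w \geq c_1 \sqrt{N}$ for a constant $c_1$ depending only on $p$. This is the easy direction.

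\textbf{Upper bound.} Here I would use Theorem~\ref{theorem:main} (or just Lemma~\ref{lemma:plateau}) to bound $N$ from below in terms of $w$. From Theorem~\ref{theorem:main}, beyond an index $n \in O(\log N)$ the configuration $\pi(N)_{[n,\infty[}$ is a concatenation of blocks, each of the form $0^{j}\cdot p\cdot(p{-}1)\cdots 1$ with $0 \leq j \leq p+1$; so each such block has length between $p$ and $2p+1$, and contributes at least $1 + 2 + \cdots + p = \binom{p+1}{2}$ to the sum of height differences occurring inside it, and crucially forces the heights there to stay relatively large. More directly: a wave $p\cdot(p{-}1)\cdots 1$ occupying columns $[j, j+p-1]$ means $h_{j-1} - h_{j+p-1} = p + (p-1) + \cdots + 1 = \binom{p+1}{2}$, i.e. the total height drops by a fixed positive amount $\binom{p+1}{2}$ over each window of length at most $2p+1$ (counting an optional $0^{\leq p+1}$ gap). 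Going from column $w$ backwards to column $n$, there are at least $(w-n)/(2p+1)$ such windows, so $h_n \geq \binom{p+1}{2}\cdot (w-n)/(2p+1) =: c\,(w-n)$. Since $h_n \leq h_0$ and, more to the point, the heights $h_0,\ldots,h_n$ are each at least $h_n$, we get $N \geq \sum_{i=0}^{w-1} h_i \geq \sum_{i=n}^{w-1} h_i \geq c' (w-n)^2$ for a suitable constant $c'$ (summing the arithmetic-progression lower bound $h_i \geq c(w-i)$ for $n \leq i < w$). Together with $n \in O(\log N)$ and the lower bound already established (so $\log N \in O(\log w)$, negligible against $w$), this yields $w \leq c_2 \sqrt{N}$.

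\textbf{Main obstacle.} The delicate point is the upper bound: I must ensure the heights stay $\Theta(w - i)$ all the way down, which requires that the wavy structure of Theorem~\ref{theorem:main} (or at least the no-long-plateau Lemma~\ref{lemma:plateau}) controls the \emph{entire} tail, and that the non-regular prefix of length $n \in O(\log N)$ truly is negligible. One must be careful that the $0$-gaps of length up to $p+1$ between waves do not let the height stagnate too long — but since each gap is bounded by the constant $p+1$ and is always followed by a genuine wave contributing a fixed drop $\binom{p+1}{2}$, the average drop per column over the tail is bounded below by a positive constant, which is exactly what the counting argument needs. A cleaner alternative that avoids invoking Theorem~\ref{theorem:main} would be to prove the upper bound directly by a potential/monovariant argument on the number of firings, but the route through the already-proven wavy shape is the most economical given what is available in the paper.
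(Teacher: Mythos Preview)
Your lower bound is exactly the paper's argument. For the upper bound, however, the paper does \emph{not} go through Theorem~\ref{theorem:main}; it uses only the elementary Lemma~\ref{lemma:plateau} (no plateau of length exceeding $p+1$). From that lemma alone, the height must drop by at least $1$ over every window of $p+1$ consecutive columns, so $h_i \geq \lfloor (w-i)/(p+1)\rfloor$ and summing gives $N \geq \sum_{j=0}^{\lfloor w/(p+1)\rfloor}(p+1)j > \bigl(\tfrac{w}{p+1}-1\bigr)^2$, hence $w < (p+1)\sqrt{N}+p+1$. This is a two-line computation with no asymptotic bookkeeping.

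Your route through Theorem~\ref{theorem:main} is logically sound (there is no circularity: the proof of Theorem~\ref{theorem:main} never invokes the support bound), and the counting you do over the wavy tail is correct. But it is considerably heavier than necessary: you invoke the full $\O(\log N)$ structure theorem, then have to argue that the prefix of length $n\in\O(\log N)$ is negligible, and manage constants coming from the block lengths $p$ to $2p+1$. You yourself mention Lemma~\ref{lemma:plateau} as an alternative and then dismiss it as less ``economical'' --- in fact the opposite is true, and it is precisely the tool the paper uses. The no-long-plateau lemma already gives the average-drop-per-column bound you extract from the wave structure, without any of the machinery of Section~\ref{s:snowball}.
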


\begin{proof}
  The support of $\pi(N)$ is denoted $w(N)$.
  
  Lower bound: $\pi(N)$ is a fixed point, therefore by definition for all index $i$ we have $\pi(N)_i \leq p$. Then,
  $$N \leq \sum \limits_{i=0}^{w(N)} p \cdot i = p \frac{w(N) \cdot (w(N)+1)}{2} < p^2(w(N)+1)^2$$
  hence $\frac{1}{p} \sqrt{N} - 1 < w(N)$.
  
  Upper bound: From Lemma \ref{lemma:plateau}, there is no plateau of length greater than $p+1$. Therefore, for $w(N) \geq p$ we have
  $$N \geq \sum \limits_{i=0}^{\lfloor \frac{w(N)}{p+1} \rfloor} (p+1) \cdot i \geq (p+1)\left(\frac{\left(\frac{w(N)}{p+1} - 1\right)\frac{w(N)}{p+1}}{2}\right) > \left(\frac{w(N)}{p+1} - 1\right)^2$$
  hence $(p+1)\sqrt{N} + p+1 > w(N)$.
\end{proof}

\begin{figure}
  \centering \includegraphics{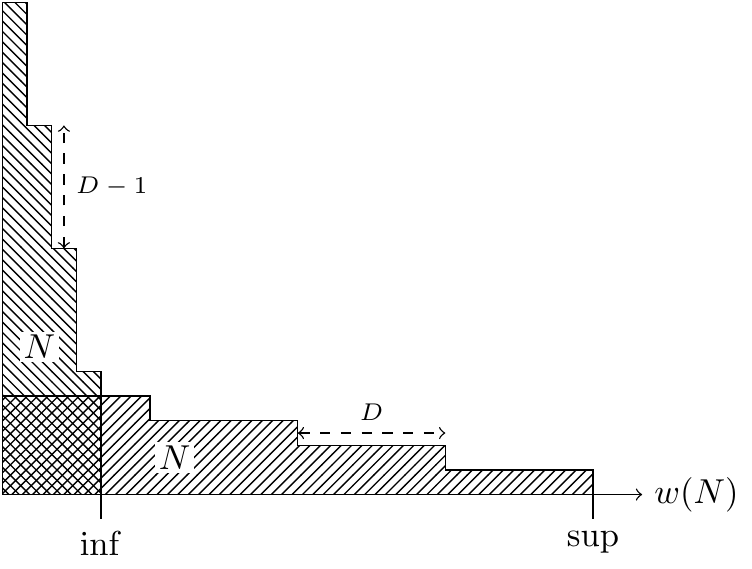}
  \caption{The support of $\pi(N)$ is in $\Theta(\sqrt{N})$. It is lower bounded by the fact that each height difference on a stable configuration is at most $p$, and upper bounded by the fact that there is no plateau of length larger than $p+1$.}
  \label{fig:frame}
\end{figure}
   
%
%

\section{$\pi(2000)$ for $p=4$}\label{a:2000}

Figure \ref{fig:2000} presents some representations of $\pi(2000)$ for $p=4$ used in the developments of the paper: differences of shot vector (figure \ref{fig:2000-dsv}), shot vector (figure \ref{fig:2000-sv}) and height (figure \ref{fig:2000-h}).
$$\pi(2000)=\begin{array}[t]{l}(4,0,4,1,3,2,4,1,1,3,4,3,4,2,0,1,4,2,2,1,\\
~~~~~~~~~4,3,2,1,0,4,3,2,1,4,3,2,1,4,3,2,1,4,3,2,1,0^\omega)\end{array}$$
We can notice on figure \ref{fig:2000-dsv} that the shot vector differences contract towards some ``steps'' of length $p$, which corresponds to the statement of Lemma \ref{lemma:average} that the vector $Y_n$ becomes constant exponentially quickly (note that this graphic plots the opposite of the values of the components of $Y_n$). The shot vector representation on figure \ref{fig:2000-sv} corresponds to the values of the components of $X_n$, which we did not manage to tackle with classical methods. Figure \ref{fig:2000-h} pictures the sand pile configuration on which the wavy shape appears starting from column 20.

\begin{landscape}

\begin{figure}
  \centering
  \subfloat[$\pi(2000)$ for $p=4$ represented as a sequence of shot vector differences.]{\label{fig:2000-dsv}\includegraphics[scale=0.75]{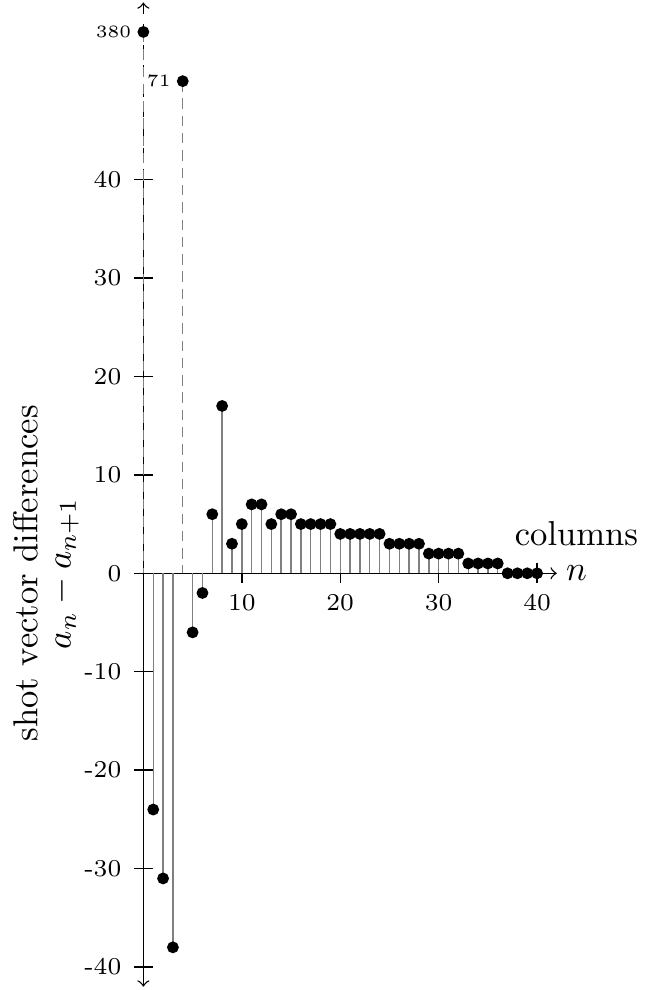}}
  \hspace{.5cm}              
  \subfloat[$\pi(2000)$ for $p=4$ represented by its shot vector.]{\label{fig:2000-sv}\includegraphics[scale=0.75]{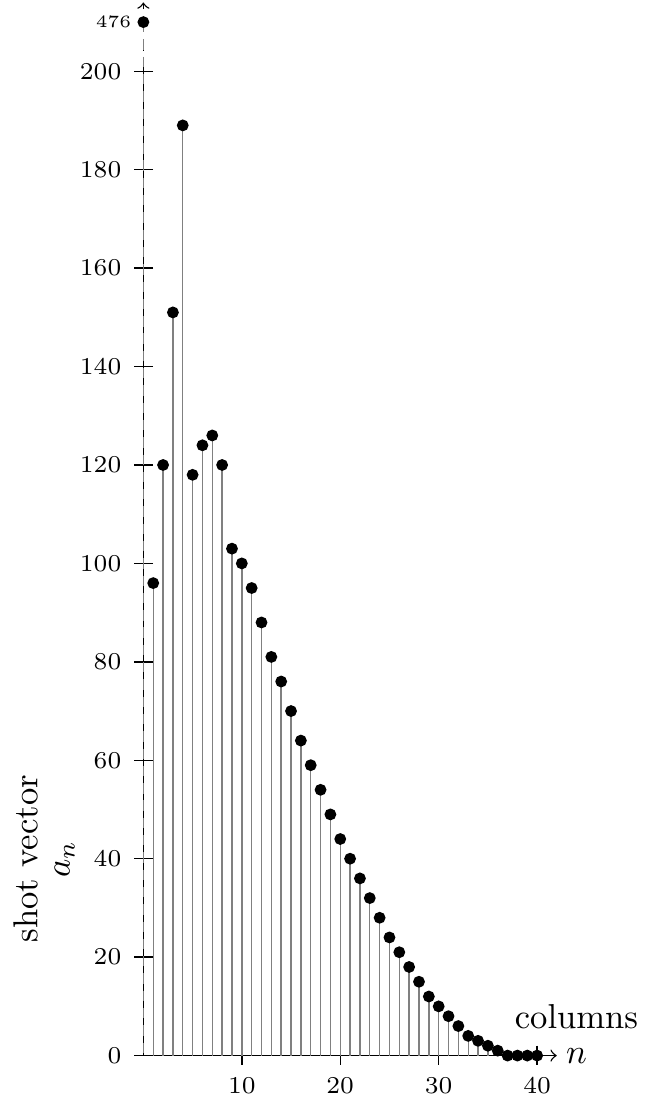}}
  \hspace{.5cm}
  \subfloat[$\pi(2000)$ for $p=4$ represented as stacked sand grains.]{\label{fig:2000-h}\includegraphics[scale=0.75]{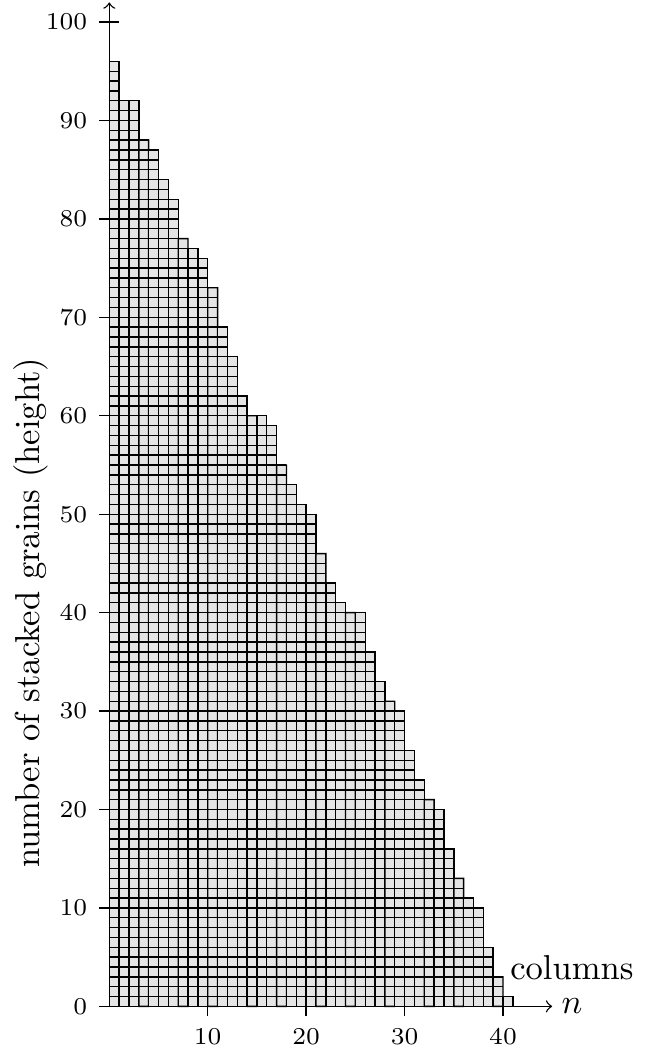}}
  \caption{Representations of $\pi(2000)$ for $p=4$.}
  \label{fig:2000}
\end{figure}

\end{landscape}

\end{document}